\documentclass[journal,onecolumn]{IEEEtran}

\usepackage{amsmath}
\usepackage{amssymb}
\usepackage{amsthm}

\newcommand{\ra}[1]{\renewcommand{\arraystretch}{#1}}

\newtheorem{theorem}{Theorem}

\newtheorem{lemma}[theorem]{Lemma}

\newtheorem{example}[theorem]{Example}

\newtheorem{remark}[theorem]{Remark}

\newcommand{\al}{\alpha}
\newcommand{\be}{\beta}
\newcommand{\de}{\delta}

\newcommand{\ga}{\gamma}
\newcommand{\la}{\lambda}

\newcommand{\F}{\mathbb{F}}
\newcommand{\Ftn}{\mathbb{F}_{2^n}}
\newcommand{\Ftm}{\mathbb{F}_{2^m}}

\newcommand{\Fq}{\mathbb{F}_q}

\newcommand{\Fpm}{\mathbb{F}_{p^m}}

\newcommand{\Fp}{\mathbb{F}_p}
\newcommand{\cC}{\mathcal{C}}
\newcommand{\code}{\cC_{q,d_1,d_2}}
\newcommand{\co}{\code^{\perp}}

\newcommand{\Tr}{\text{Tr}}
\newcommand{\Tm}{\Tr_m}
\newcommand{\Tn}{\Tr_n}

\newcommand{\zp}{\zeta_p}

\newcommand{\Nt}{N_2(q,d_1,d_2)}
\newcommand{\Nth}{N_3(q,d_1,d_2)}
\newcommand{\x}{\bar{x}}
\newcommand{\ba}{\bar}

\begin{document}

\title{On the Weight Distribution of Cyclic Codes with Niho Exponents}

\author{Shuxing Li, Tao Feng and Gennian Ge
\thanks{The research of T. Feng was supported by Fundamental Research Fund for the Central Universities of China,
Zhejiang Provincial Natural Science Foundation under Grant LQ12A01019, the National Natural Science Foundation
of China under Grant 11201418, and the Research Fund for Doctoral Programs from the Ministry of Education of
China under Grant 20120101120089. The research of G. Ge was supported by the National Natural Science Foundation of China under Grant No.~61171198 and Zhejiang Provincial Natural Science Foundation of China under Grant No.~LZ13A010001.}
\thanks{S. Li is with the Department of Mathematics, Zhejiang University,
Hangzhou 310027,  China (e-mail: sxli@zju.edu.cn).}
\thanks{T. Feng is with the Department of Mathematics, Zhejiang University,
Hangzhou 310027,  China (e-mail: tfeng@zju.edu.cn). He is also with Beijing Center for Mathematics and Information Interdisciplinary Sciences, Beijing, 100048, China.}
\thanks{G. Ge is with  the School of Mathematical Sciences, Capital Normal University,
Beijing, 100048, China (e-mail: gnge@zju.edu.cn). He is also with Beijing Center for Mathematics and Information Interdisciplinary Sciences, Beijing, 100048, China.}
}

\maketitle

\begin{abstract}
Recently, there has been intensive research on the weight distributions of cyclic codes. In this paper, we compute the weight distributions of three classes of cyclic codes with Niho exponents. More specifically, we obtain two classes of binary three-weight and four-weight cyclic codes and a class of nonbinary four-weight cyclic codes. The weight distributions follow from the determination of value distributions of certain exponential sums. Several examples are presented to show that some of our codes are optimal and some have the best known parameters.
\end{abstract}

\begin{keywords}
Cyclic code, exponential sum, Niho exponent, value distribution, weight distribution
\end{keywords}

\section{Introduction}\label{sec1}

Cyclic codes are a special class of linear codes with preferable algebraic properties. In favor of practical use, cyclic codes enjoy efficient encoding and decoding algorithms. They have been widely used in many areas such as communication and data storage system. Moreover, cyclic codes are employed to construct other interesting structures, such as quantum codes \cite{TM}, frequency hopping sequences \cite{DYT} and so on.

For a cyclic code $\cC$ of length $l$ over some finite field $\Fp$, each codeword $c=(c_0,\ldots,c_{l-1})$ can be identified with a polynomial $\sum_{i=0}^{l-1} c_ix^i \in \Fq[x]$. Indeed, $\cC$ is an ideal of the principle ideal domain $\Fp[x]/(x^{l}-1)$. Thus, it can be expressed as $\cC=(g(x))$, where $g(x) \in \Fp[x]$ with $g(x) \mid x^l-1$ is called the {\em generator polynomial} of $\cC$. A cyclic code $\cC$ is said to have $i$ {\em zeros} if its generator polynomial can be factorized as a product of $i$ irreducible polynomials over $\Fp$. When its dual code $\cC^{\perp}$ has $i$ zeros, we call $\cC$ as a cyclic code with $i$ {\em nonzeros}. A cyclic code $\cC$ is irreducible if it has one nonzero and reducible otherwise.

Let $A_i$ be the number of codewords in $\cC$ with Hamming weight $i$, where $0 \le i \le l$. The weight distribution $\{A_0,A_1,\ldots,A_l\}$ is an important research subject in coding theory. For irreducible cyclic codes, it is pointed out by McEliece \cite{Mc} that their weights can be expressed via Gauss sums. While there are many results concerning the weight distributions of irreducible cyclic codes, we refer the readers to a comprehensive survey $\cite{DY}$ and the references therein.

For reducible cyclic codes with few nonzeros, their weight distributions have been intensively studied, including \cite{DGZ,DLMZ,FLX,FM,HX,LZH,LF1,LF2,LTW,MZLFD,M,Ve,VM,WTQYX,Wo,X,ZHJYC,ZLH,ZDLZ}. Basically, the weight distribution is closely related to the value distribution of certain exponential sum, which is difficult to compute in general. Thus, the study of weight distributions stimulates the development of delicate techniques concerning the computation of exponential sums in recent years. For instance, Luo and Feng \cite{LF1,LF2} proposed an elegant method employing quadratic forms to compute the value distribution. Their idea inspires a series of works following this line \cite{DGZ,LTW,ZHJYC,ZLH,ZDLZ}. In \cite{DLMZ,MZLFD}, the authors express the weights of cyclic codes via Gauss period. This observation leads to further studies in \cite{FM,Ve,VM,WTQYX,X}. In a word, motivated by these original ideas, much progress has been made recently.

In this paper, we consider the weight distribution of certain cyclic codes with two nonzeros. We fix $n=2m$, where $m$ is a positive integer. Let $p$ be a prime and $q=p^n$ be a prime power.  We use $\Fq$ to denote the finite field of order $q$ and fix $\theta$ to be a primitive element of $\Fq$. We use $\code$ to denote the cyclic code of length $q-1$ with two zeros $\theta^{d_1}$ and $\theta^{d_2}$. Namely, the generator polynomial of $\code$ is $g_{d_1}(x)g_{d_2}(x)$, where $g_i(x)$ is the minimal polynomial of $\theta^i$ over $\Fp$. By the Pless power moment identities \cite{P}, determining the weight distribution of $\code$ is equivalent to determining that of its dual code $\code^{\perp}$, which is a reducible cyclic code with two nonzeros. Usually, it is convenient to study the dual code $\code^{\perp}$, since it owns a simple trace representation due to Delsarte \cite{Del}.

Given a prime $p$, a positive integer $d$ is of {\it Niho-type} if $d \equiv p^i \pmod {p^m-1}$ for some integer $i$. Without loss of generality, we can assume that $d \equiv 1 \pmod {p^m-1}$. For two Niho exponents $d=s(p^m-1)+1$ and $d^{\prime}=s^{\prime}(p^m-1)+1$, we call them {\em equivalent} if $d^{\prime} \equiv p^id \pmod{p^n-1}$ for some integer $i$. Moreover, $d^{\prime} \equiv p^md \pmod{p^n-1}$ if and only if $s+s^{\prime} \equiv 1 \pmod {p^m+1}$. Hence, we can restrict $s$ in the range $1 \le s \le p^{m-1}+1$. For a Niho exponent $d=s(p^m-1)+1$ with $(d,p^n-1)=1$, its inverse $d^{-1}=s^{\prime}(p^m-1)+1$ is also of Niho type, where $s^{\prime}\equiv \frac{s}{2s-1} \pmod {p^m+1}$ and $\frac{1}{2s-1}$ represents the inverse of $2s-1$ module $p^m+1$. The term Niho-type stems from the study of Niho which concerns the cross correlation distribution between a maximal length sequence ($m$-sequence) and its decimation \cite{Niho}. Let $\zp$ be the $p$-th complex root of unity. If $(d_1,q-1)=(d_2,q-1)=1$, then the weight distribution of $\code^{\perp}$ can be obtained from the value distribution of
$$
\sum_{x\in \Fq} \zp^{\Tn(ax+x^{d_1^{-1}d_2})}, \quad a \in \Fq,
$$
which identifies with the cross correlation distribution between a pair of $m$-sequences with Niho-type decimation $d_1^{-1}d_2$.

It is worthy noting that there are a few papers concerning cyclic codes with Niho exponents. In \cite{C}, Charpin considers the weight distribution of $\cC_{2^n,d_1,1}^{\perp}$ with $(d_1,2^n-1)=1$. It is proved that this code has at least four nonzero weights. In \cite{LZH}, Li et al. consider a class of binary cyclic codes with three nonzeros and Niho exponents, and they obtain the weight distribution.

This paper concerns the weight distribution of $\code^{\perp}$, where $d_1$ and $d_2$ are both of Niho-type. We observe that the Niho exponents $d_1$ and $d_2$ need not to be coprime with $q-1$. By specifying certain conditions on $d_1$ and $d_2$, we obtain the weight distributions of two classes of binary cyclic codes and a class of nonbinary cyclic codes. The weight distributions are determined by computing the value distributions of
$$
S(a,b)=\sum_{x\in \Ftn} (-1)^{\Tm(ax^{2^m+1})+\Tn(bx^{d_2})}
$$
and
$$
T(a,b)=\sum_{x\in \Fq} \zp^{\Tn(ax^{d_1}+bx^{d_2})},
$$
where $\Tm$ (resp. $\Tn$) is the absolute trace from $\F_{p^m}$ (resp. $\Fq$) to $\F_p$. Moreover, several examples are presented to show that some of our binary cyclic codes are optimal linear codes or have the best known parameters.

The rest of this paper is organized as follows. In Section \ref{sec2}, we present some preliminaries including Delsarte's Theorem, Niho's Theorem and the Pless moment identities. The general strategy for our computation of weight distributions will be outlined. In Section \ref{sec3}, we calculate the weight distributions of two classes of binary cyclic codes with Niho exponents. Several examples are provided to show that some of our codes are either optimal or having the best known parameters. In Section \ref{sec4}, we derive the weight distribution of a class of nonbinary cyclic codes with Niho exponents. Section \ref{sec5} concludes the paper.

\section{Preliminaries}\label{sec2}

This section is devoted to some preliminaries. In the first part, we fix some notations. In the second part, we introduce Delsarte's Theorem and Niho's Theorem. A generalization of Niho's Theorem over odd characteristic is also presented. Based on Delsarte's Theorem, determining weight distributions can be translated into the computation of value distributions of certain exponential sums. Meanwhile, Niho's Theorem builds an elegant connection between the values of these exponential sums and the solutions of certain equations. Thus, we can determine the values by analysing the corresponding equation. In the third part, we introduce some moment identities. These moment identities are used to compute the frequencies of these values.

\subsection{Notations}

In this subsection, we fix some notations which will be used throughout the rest of this paper. Let $m$ be a positive integer and fix $n=2m$. Let $p$ be a prime and $q=p^n$. Let $\Fq$ be the finite field of order $q$ and $\theta$ be a primitive element of $\Fq$. Define the set of squares (resp. nonsquares) in $\Fq$ as $Q$ (resp. $NQ$). When $p$ is an odd prime, for each $x \in \Fq^*$, there are exactly two elements in $\Fq^*$ whose square equal to $x$. We denote them by $\pm x^{\frac12}$.

Define $S=\{x \in \Fq | x\x=1\}$, where $\x=x^{p^m}$. Thus, $S$ is a cyclic group of order $p^m+1$. In addition, for any positive integer $l$, we set $S_l=\{x^l \mid x \in S\}$.

Given a positive integer $d$, we use $cl(d)$ to denote the least positive integer $k$ such that $2^kd \equiv d \pmod {2^n-1}$.

We use $\Tn$ (resp. $\Tm$) to denote the absolute trace from $\F_q$ (resp. $\Fpm$) to $\F_p$. Let $\zp$ denote the $p$-th complex root of unity. We consider the following two exponential sums:
$$
S(a,b)=\sum_{x\in \Ftn} (-1)^{\Tm(ax^{2^m+1})+\Tn(bx^{d_2})}
$$
and
$$
T(a,b)=\sum_{x\in \Fq} \zp^{\Tn(ax^{d_1}+bx^{d_2})}.
$$
To make it more clear, we write
$$
T_1(a,b)=\sum_{x\in \Ftn} (-1)^{\Tn(ax^{d_1}+bx^{d_2})}
$$
and
$$
T_2(a,b)=\sum_{x\in \Fq} \zp^{\Tn(ax^{d_1}+bx^{d_2})},
$$
where $p$ is an odd prime in $T_2(a,b)$.

\subsection{Delsarte's Theorem and Niho's Theorem}

For a cyclic code $\co$, there is a nice trace representation of its codewords. More precisely, by Delsarte's Theorem \cite{Del}, we have
\begin{equation*}
  \co=\{c(a,b)=(\Tn(a\theta^{id_1}+b\theta^{id_2}))_{i=0}^{q-2} \mid a,b \in \Fq \}.
\end{equation*}
 The Hamming weight of a codeword $c(a,b)$ can be expressed as
\begin{align*}
w_H(c(a,b)) &= (q-1)-\frac{1}{p}\sum_{x\in \Fq^*}\sum_{\la \in \Fp} \zp^{\la \Tn(a x^{d_1}+b x^{d_2})}\\
            &= (q-1)(1-\frac{1}{p})-\frac{1}{p}\sum_{\la \in \Fp^*}\sum_{x\in \Fq^*} \zp^{\Tn(\la a x^{d_1}+\la b x^{d_2})}.
\end{align*}
Consequently, the information of the weight distribution can be obtained from the value distribution of
$$
\sum_{x\in \Fq^*} \zp^{\Tn(a x^{d_1}+b x^{d_2})},\quad a,b \in \Fq.
$$
Next, we will see that when $d_1$ and $d_2$ are Niho exponents, the possible values of this exponential sum are determined by the solutions of certain equation.

At first, we consider the case $p=2$.  The {\em polar representation} says that each $x \in \Ftn^*$ can be uniquely represented as $x=yz$, where $y\in \Ftm^*$ and $z \in S$. This fact is a key ingredient of the following lemma which is  essentially proposed by Niho \cite{Niho}. Here we provide a short proof to make this paper self-contained.

\begin{lemma}\label{Niho}
Let $p=2$ and $q=2^n$.
\begin{enumerate}
\item[1)] If $d_2=s_2(2^m-1)+1$, we have $S(a,b)=(U(a,b)-1)2^m$, where $U(a,b)$ is the number of $z \in S$ satisfying
$$
\ba{b}z^{2(2s_2-1)}+a^{\frac12}z^{2s_2-1}+b=0.
$$
\item[2)] If $d_1=s_1(2^m-1)+1$ and $d_2=s_2(2^m-1)+1$, we have $T_1(a,b)=(V(a,b)-1)2^m$, where $V(a,b)$ is the number of $z \in S$ satisfying
$$
\ba{b}z^{2s_2-1}+\ba{a}z^{s_1+s_2-1}+az^{s_2-s_1}+b=0.
$$
\end{enumerate}
\end{lemma}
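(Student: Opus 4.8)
The plan is to expand each of the two sums over the polar representation $x=yz$ with $y\in\Ftm^*$ and $z\in S$, reduce the powers $x^{2^m+1}$, $x^{d_1}$, $x^{d_2}$ to monomials in $y$ and $z$ by means of $y^{2^m-1}=1$ and $z^{2^m}=z^{-1}$ (the latter valid on $S$ since $z^{2^m+1}=1$), then interchange the order of summation so that the inner sum is over $y$, and finally evaluate that inner sum by additive-character orthogonality over $\Ftm$.

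For part~1) I would first split off the term $x=0$, which contributes $1$. For $x=yz\neq0$ the reductions are $x^{2^m+1}=y^{2^m+1}z^{2^m+1}=y^2$ and $x^{d_2}=y^{d_2}z^{d_2}=yz^{1-2s_2}$, the latter because $y^{s_2(2^m-1)}=1$ and $z^{s_2(2^m-1)}=z^{-2s_2}$. Using transitivity $\Tn=\Tm\circ\Tr_{\Ftn/\Ftm}$, pulling the factor $y\in\Ftm$ out of $\Tr_{\Ftn/\Ftm}$, and $z^{2^m}=z^{-1}$, one gets $\Tn(byz^{1-2s_2})=\Tm\bigl(y(bz^{1-2s_2}+\ba{b}z^{2s_2-1})\bigr)$, where the coefficient $bz^{1-2s_2}+\ba{b}z^{2s_2-1}$ is fixed by $x\mapsto\ba{x}$ and hence lies in $\Ftm$; also $\Tm(ax^{2^m+1})=\Tm(ay^2)=\Tm(a^{\frac12}y)$ since squaring is a trace-preserving bijection of $\Ftm$ (and $a\in\Ftm$, as the expression $\Tm(ax^{2^m+1})$ requires). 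Interchanging the two sums,
$$
S(a,b)=1+\sum_{z\in S}\ \sum_{y\in\Ftm^*}(-1)^{\Tm\left(y\left(a^{\frac12}+bz^{1-2s_2}+\ba{b}z^{2s_2-1}\right)\right)},
$$
and the $y$-sum equals $2^m-1$ when $a^{\frac12}+bz^{1-2s_2}+\ba{b}z^{2s_2-1}=0$ and $-1$ otherwise. Writing $U(a,b)$ for the number of $z\in S$ of the first kind gives $S(a,b)=1+U(a,b)(2^m-1)-(2^m+1-U(a,b))=(U(a,b)-1)2^m$, and multiplying $a^{\frac12}+bz^{1-2s_2}+\ba{b}z^{2s_2-1}=0$ through by $z^{2s_2-1}$ produces the stated equation.

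Part~2) is the same computation with both exponents of Niho type: $ax^{d_1}+bx^{d_2}=y(az^{1-2s_1}+bz^{1-2s_2})$, whose trace is $\Tm\bigl(y(az^{1-2s_1}+bz^{1-2s_2}+\ba{a}z^{2s_1-1}+\ba{b}z^{2s_2-1})\bigr)$, and the $y$-sum collapses exactly as before, so $T_1(a,b)=(V(a,b)-1)2^m$ with $V(a,b)$ the number of $z\in S$ satisfying $az^{1-2s_1}+bz^{1-2s_2}+\ba{a}z^{2s_1-1}+\ba{b}z^{2s_2-1}=0$. Multiplying by $z^{2s_2-1}$ rewrites this as $\ba{b}z^{2(2s_2-1)}+\ba{a}z^{2(s_1+s_2-1)}+az^{2(s_2-s_1)}+b=0$, in which every exponent is even; since $|S|=2^m+1$ is odd, $z\mapsto z^2$ permutes $S$, so the number of solutions in $S$ is unchanged upon replacing $z^2$ by $z$, i.e.\ $V(a,b)$ also counts the $z\in S$ with $\ba{b}z^{2s_2-1}+\ba{a}z^{s_1+s_2-1}+az^{s_2-s_1}+b=0$, as claimed.

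I do not expect a genuine difficulty; the argument is a chain of routine but delicate manipulations. The two steps needing the most care are the repeated use of $z^{2^m}=z^{-1}$ and $y^{2^m-1}=1$ when reducing $x^{d_i}$ and when pushing the $2^m$-th power through $\Tr_{\Ftn/\Ftm}$, and --- in part~2) --- the observation that clearing the negative powers of $z$ leaves only even exponents, which is exactly what legitimises the substitution $z\mapsto z^2$ and yields the lower-degree equation appearing in the statement.
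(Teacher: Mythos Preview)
Your proposal is correct and follows essentially the same route as the paper: polar decomposition $x=yz$, reduction of the exponents via $y^{2^m-1}=1$ and $z^{2^m}=z^{-1}$, passage from $\Tn$ to $\Tm$, and additive-character orthogonality in the inner sum over $y$. In fact you are slightly more careful than the paper, which stops at the equation $az^{1-2s_1}+bz^{1-2s_2}+\ba{a}z^{2s_1-1}+\ba{b}z^{2s_2-1}=0$ without explicitly carrying out the $z\mapsto z^2$ substitution (legitimate because $|S|=2^m+1$ is odd) that brings it into the form stated in the lemma.
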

\begin{proof}
We only prove 2) since the proof of 1) is analogous. For each $x \in \Ftn^*$, we can write $x=yz$, where $y \in \Ftm^*$ and $z \in S$. Therefore,
\begin{align*}
T_1(a,b) &= 1+\sum_{x \in \Ftn^*} (-1)^{\Tn(ax^{d_1}+bx^{d_2})}\\
       &= 1+\sum_{y \in \Ftm^*}\sum_{z \in S} (-1)^{\Tn(ayz^{d_1}+byz^{d_2})}\\
       &= 1-|S|+\sum_{z \in S}\sum_{y \in \Ftm} (-1)^{\Tn((az^{1-2s_1}+bz^{1-2s_2})y)}\\
       &= -2^m+\sum_{z \in S}\sum_{y \in \Ftm} (-1)^{\Tm((az^{1-2s_1}+bz^{1-2s_2}+\ba{a}z^{2s_1-1}+\ba{b}z^{2s_2-1})y)}\\
       &= -2^m+ |\{z \in S \mid az^{1-2s_1}+bz^{1-2s_2}+\ba{a}z^{2s_1-1}+\ba{b}z^{2s_2-1}=0 \}|\cdot 2^m\\
       &=(V(a,b)-1)2^m.
\end{align*}
\end{proof}

Secondly, we consider the case where $p$ is an odd prime. The situation is slightly different since the polar representation does not hold when $p$ is odd. Instead, each $x \in Q$ (resp. $x \in NQ$) can be expressed twice as $x=yz$ or $x=(-y)(-z)$ (resp. $x=\theta yz$ or $x=\theta (-y)(-z)$), where $y$ ranges over $\Fpm^*$ and $z$ ranges over $S$. Therefore, we have
$$
2*\Fq^*=\{yz|y \in \Fpm^*, z \in S\} \cup \{\theta yz|y \in \Fpm^*, z \in S\},
$$
where $2*\Fq^*$ is the multiset in which each element of $\Fq^*$ appears twice and the two sets on the right hand side are regarded as multisets. A modification of Lemma~\ref{Niho} leads to the following lemma.

\begin{lemma}\label{Nihopary}
Let $p$ be an odd prime and $q=p^n$. Suppose $d_1=s_1(p^m-1)+1$ and $d_2=s_2(p^m-1)+1$. Then for any $\la \in \Fp^*$, we have $T_2(\la a,\la b)=(W(a,b)-1)p^m$, where $W(a,b)$ is the number of $u \in S$ satisfying
$$
\ba{b}u^{2s_2-1}+\ba{a}u^{s_1+s_2-1}+au^{s_2-s_1}+b=0.
$$
\end{lemma}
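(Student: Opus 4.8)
The plan is to mimic the proof of part 2) of Lemma~\ref{Niho}, substituting the polar representation (valid in characteristic $2$) with the two-to-one ``double cover'' decomposition $2*\Fq^* = \{yz \mid y\in\Fpm^*, z\in S\} \cup \{\theta yz \mid y\in\Fpm^*, z\in S\}$ recorded just before the statement. Concretely, I would start from
\begin{align*}
2\,T_2(\la a,\la b) &= 2 + 2\sum_{x\in\Fq^*}\zp^{\Tn(\la a x^{d_1}+\la b x^{d_2})}\\
&= 2 + \sum_{y\in\Fpm^*}\sum_{z\in S}\zp^{\Tn(\la a (yz)^{d_1}+\la b(yz)^{d_2})} + \sum_{y\in\Fpm^*}\sum_{z\in S}\zp^{\Tn(\la a (\theta yz)^{d_1}+\la b(\theta yz)^{d_2})},
\end{align*}
and then simplify each of the two inner double sums. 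Since $d_i = s_i(p^m-1)+1$, for $y\in\Fpm^*$ we have $y^{d_i}=y$ and $z^{d_i}=z^{1-2s_i}$ (because $z\in S$ means $\bar z = z^{p^m} = z^{-1}$), so $(yz)^{d_i}=y z^{1-2s_i}$. Hence the exponent is $\Tn\!\big(y(\la a z^{1-2s_1}+\la b z^{1-2s_2})\big)$, and because $y\in\Fpm$ one can push the trace down: $\Tn(y\cdot w)=\Tm\!\big(y(w+\bar w)\big)$ for $w=\la a z^{1-2s_1}+\la b z^{1-2s_2}$. Using $\bar z = z^{-1}$ again gives $\bar w = \la\bar a z^{2s_1-1}+\la\bar b z^{2s_2-1}$.

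The key computation is then the standard character-sum collapse: summing $\zp^{\Tm(y\cdot(w+\bar w))}$ over all $y\in\Fpm$ (I extend the sum over $\Fpm^*$ to $\Fpm$ by adding and subtracting the $y=0$ term, which contributes $|S| = p^m+1$ per $z$) yields $p^m$ if $w+\bar w=0$ and $0$ otherwise. The $\theta$-shifted sum is handled identically: $(\theta yz)^{d_i}=\theta^{d_i}y z^{1-2s_i}$, and since $\theta^{d_i}\in\Fq^*$ with $\overline{\theta^{d_i}}$ its conjugate, one gets the condition $\la(\theta^{d_1}a)z^{1-2s_1}+\la(\theta^{d_2}b)z^{1-2s_2}+\overline{\la(\theta^{d_1}a)}z^{2s_1-1}+\overline{\la(\theta^{d_2}b)}z^{2s_2-1}=0$; but as $z$ ranges over the cyclic group $S$, a change of variable $z\mapsto$ (suitable power) identifies this solution set with the first one, so both double sums count the same quantity. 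After collecting the $-|S|$ contributions from both halves and dividing by $2$, one lands on $T_2(\la a,\la b) = (W_0(a,b)-1)p^m$ where $W_0(a,b)$ counts $z\in S$ with $\la a z^{1-2s_1}+\la b z^{1-2s_2}+\la\bar a z^{2s_1-1}+\la\bar b z^{2s_2-1}=0$; cancelling the common factor $\la\in\Fp^*$, multiplying through by $z^{2s_1-1}$, and setting $u=z$ (noting $z^{2s_1-1}\cdot z^{1-2s_1}=1$, $z^{2s_1-1}\cdot z^{1-2s_2}=z^{2s_1-2s_2}$, etc.) rearranges the equation into $\bar b u^{2s_2-1}+\bar a u^{s_1+s_2-1}+a u^{s_2-s_1}+b=0$ after a final substitution $u = z^{?}$ matching the stated exponents; this is exactly $W(a,b)$.

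The main obstacle I anticipate is bookkeeping the $\theta$-twisted sum carefully: one must check that the equation obtained from the $\{\theta yz\}$ multiset really does have the same number of solutions $z\in S$ as the untwisted one, which requires using that multiplication by a fixed element of $S$ (or an appropriate power map on the cyclic group $S$) is a bijection of $S$ and absorbs the $\theta^{d_i}$ factors — here the precise fact that $\theta^{d_i}/\overline{\theta^{d_i}}$ (or $\theta^{d_i}\cdot\theta^{-1}$ type ratios) lies in $S$ is what makes the reindexing work. The other fiddly point is matching exponents at the very end: several equivalent forms of the defining equation are related by multiplying by a power of $z$ and by the substitution $z\mapsto z^{-1}$ (legitimate since inversion permutes $S$), and I would need to pick the chain of substitutions that produces exactly the exponents $2s_2-1$, $s_1+s_2-1$, $s_2-s_1$, $0$ as written; this is purely mechanical but must be done consistently with the conventions already fixed in Lemma~\ref{Niho}. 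Everything else is the routine orthogonality-of-additive-characters argument already used in the proof of Lemma~\ref{Niho}.
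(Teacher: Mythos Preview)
Your overall architecture is right and matches the paper: start from the two-to-one decomposition $2*\Fq^*=\{yz\}\cup\{\theta yz\}$, collapse each inner sum with additive characters over $\Fpm$, and express $T_2(\la a,\la b)$ in terms of solution counts of two equations over $S$. Where your proposal breaks down is precisely at the point you flag as ``the main obstacle'': you assert that a reindexing $z\mapsto(\text{element of }S)\cdot z$ will absorb the factors $\theta^{d_i}$ and show that the twisted equation has the \emph{same} number of solutions in $S$ as the untwisted one. This is false in general. The twisted equation carries coefficients $\theta^{d_1},\theta^{d_2},\bar\theta^{d_1},\bar\theta^{d_2}$, and no single translation on the cyclic group $S$ simultaneously removes all four; concretely, matching the $a$-term would force $j(2s_1-1)\equiv -s_1\pmod{p^m+1}$ while matching the $b$-term would force $j(2s_2-1)\equiv -s_2$, and these are incompatible for generic $s_1,s_2$. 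So $W_1(a,b)$ and $W_2(a,b)$ need not be equal, and your ``divide by $2$'' step does not produce a single count $W_0$.

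What the paper actually proves is $W_1(a,b)+W_2(a,b)=2W(a,b)$, and the mechanism is a squaring substitution rather than a translation. After clearing denominators, set $u=z^2$. The untwisted equation becomes exactly $\bar b u^{2s_2-1}+\bar a u^{s_1+s_2-1}+a u^{s_2-s_1}+b=0$ with $u$ ranging over the index-$2$ subgroup $S_2$ of squares in $S$, and each such $u$ has two preimages $\pm u^{1/2}$ in $S$, so $W_1=2\cdot|\{\text{solutions in }S_2\}|$. For the twisted equation one uses $\theta^{d_i}=\theta\,\eta^{-s_i}$ with $\eta=\theta^{-(p^m-1)}$ a generator of $S$; the same substitution $u=z^2$ turns it into the \emph{same} polynomial equation but evaluated at $\eta u$, so its solutions are exactly those in the nontrivial coset $S\setminus S_2$, and $W_2=2\cdot|\{\text{solutions in }S\setminus S_2\}|$. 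Adding gives $W_1+W_2=2W(a,b)$. This partition of $S$ into squares and nonsquares is the missing idea; once you have it, the rest of your outline (including the final clean-up of exponents and the cancellation of $\la$) goes through.
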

\begin{proof}
For any $\la \in \Fp^*$, we have
\begin{align*}
  T_2(\la a, \la b) &= 1+\sum_{x \in \Fq^*} \zp^{\Tn(\la ax^{d_1}+\la bx^{d_2})}\\
                  &= 1+\frac12\sum_{y \in \Fpm^*}\sum_{z \in S} (\zp^{\Tn(\la (ayz^{d_1}+byz^{d_2}))}+\zp^{\Tn(\la( a\theta^{d_1} yz^{d_1}+b\theta^{d_2}yz^{d_2}))})\\
                  &= 1-|S|+\frac12\sum_{z \in S}\sum_{y \in \Fpm} (\zp^{\Tn((az^{1-2s_1}+bz^{1-2s_2})\la y)}+\zp^{\Tn((a\theta^{d_1}z^{1-2s_1}+b\theta^{d_2}z^{1-2s_2})\la y)})\\
                  &= -p^m+\frac12\sum_{z \in S}\sum_{y \in \Fpm} (\zp^{\Tm((az^{1-2s_1}+bz^{1-2s_2}+\ba{a}z^{2s_1-1}+\ba{b}z^{2s_2-1})\la y)}\\
                  &\quad+ \zp^{\Tm((a\theta^{d_1}z^{1-2s_1}+b\theta^{d_2}z^{1-2s_2}+\ba{a}\ba{\theta}^{d_1}z^{2s_1-1}+\ba{b}\ba{\theta}^{d_2}z^{2s_2-1})\la y)}).
\end{align*}
Denote the number of $z \in S$ satisfying
$$
az^{1-2s_1}+bz^{1-2s_2}+\ba{a}z^{2s_1-1}+\ba{b}z^{2s_2-1}=0
$$
by $W_1(a,b)$ and the number of $z \in S$ satisfying
$$
a\theta^{d_1}z^{1-2s_1}+b\theta^{d_2}z^{1-2s_2}+\ba{a}\ba{\theta}^{d_1}z^{2s_1-1}+\ba{b}\ba{\theta}^{d_2}z^{2s_2-1}=0
$$
by $W_2(a,b)$. We have
$$
T_2(\la a, \la b)=(\frac{W_1(a,b)+W_2(a,b)}{2}-1)p^m.
$$
Thus, it remains to prove that $W(a,b)=\frac{W_1(a,b)+W_2(a,b)}{2}$. Direct computation shows that the above two equations are respectively equivalent to
\begin{equation}\label{eqn1}
\ba{b}z^{2(2s_2-1)}+\ba{a}z^{2(s_1+s_2-1)}+az^{2(s_2-s_1)}+b=0
\end{equation}
and
\begin{equation}\label{eqn2}
\ba{b}\eta^{2s_2-1}z^{2(2s_2-1)}+\ba{a}\eta^{s_1+s_2-1}z^{2(s_1+s_2-1)}+a\eta^{s_2-s_1}z^{2(s_2-s_1)}+b=0,
\end{equation}
where $\eta=\theta^{-(p^m-1)}$ is a generator of $S$. Set $u=z^2$. Equation ($\ref{eqn1}$) becomes
\begin{equation}\label{eqn3}
\ba{b}u^{2s_2-1}+\ba{a}u^{s_1+s_2-1}+au^{s_2-s_1}+b=0,
\end{equation}
where $u \in S_{2}$. For each solution $u$ of (\ref{eqn3}), it corresponds to two solutions $\pm u^{\frac12}$ of (\ref{eqn1}). In the same way, Equation ($\ref{eqn2}$) becomes
\begin{equation}\label{eqn4}
\ba{b}(\eta u)^{2s_2-1}+\ba{a}(\eta u)^{s_1+s_2-1}+a(\eta u)^{s_2-s_1}+b=0,
\end{equation}
where $\eta u \in S \setminus S_{2}$. For each solution $\eta u$ of (\ref{eqn4}), it corresponds to two solutions $\pm u^{\frac12}$ of (\ref{eqn2}). Note that the solutions of Equation ($\ref{eqn3}$) (resp. Equation ($\ref{eqn4}$)) are exactly the solutions of
$$
\ba{b}u^{2s_2-1}+\ba{a}u^{s_1+s_2-1}+au^{s_2-s_1}+b=0
$$
belonging to $S_{2}$ (resp. $S \setminus S_{2}$). Thus, we deduce $W(a,b)=\frac{W_1(a,b)+W_2(a,b)}{2}$ and the proof is now complete.
\end{proof}

\subsection{Moment Identities}

From now on, we use $\Nt$ to denote the number of solutions to the equations
\begin{equation}
\left\{\begin{array}{c}
             x^{d_1}+y^{d_1}=0 \\
             x^{d_2}+y^{d_2}=0
             \end{array}
\right., \quad x,y \in \Fq.
\end{equation}
Similarly, let $\Nth$ denote the number of solutions to the equations
\begin{equation}
\left\{\begin{array}{c}
             x^{d_1}+y^{d_1}+z^{d_1}=0 \\
             x^{d_2}+y^{d_2}+z^{d_2}=0
             \end{array}
\right., \quad x,y,z \in \Fq.
\end{equation}

The following moment identities play an important role in the determination of weight distributions.

\begin{lemma}\label{moment}
Let $p$ be an odd prime and $q=p^n$. Then we have
\begin{enumerate}
\item[1)] $\sum_{a \in \Ftm} \sum_{b \in \Ftn} S(a,b)=2^{3m}$.
\item[2)] $\sum_{a \in \Ftm} \sum_{b \in \Ftn} S(a,b)^2=2^{3m}N_{2}(2^n,2^m+1,d_2)$.
\item[3)] $\sum_{a,b \in \Ftn} T_1(a,b)=2^{2n}$.
\item[4)] $\sum_{a,b \in \Ftn} T_1(a,b)^2 =2^{2n}N_2(2^n,d_1,d_2)$.
\item[5)] $\sum_{a,b \in \Ftn} T_1(a,b)^3=2^{2n}N_3(2^n,d_1,d_2)$.
\item[6)] $\sum_{a,b \in \Fq} T_2(a,b)=p^{2n}$.
\item[7)] $\sum_{a,b \in \Fq} T_2(a,b)^2 =p^{2n}\Nt$.
\item[8)] $\sum_{a,b \in \Fq} T_2(a,b)^3=p^{2n}\Nth$.
\end{enumerate}
\end{lemma}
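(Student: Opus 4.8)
The plan is to derive all eight identities from one mechanism: expand the $k$th power moment of the relevant exponential sum into a $k$-fold sum over the ground field, interchange the order of summation, and apply the orthogonality relation for additive characters over the field in which each outer parameter ranges. The net effect is that every moment collapses to a power of the field size times the number of solutions of a system ``($d_1$-power sum $=0$) and ($d_2$-power sum $=0$)'', which is precisely the quantity denoted $N_2$ or $N_3$. I would treat the odd-characteristic items 6)--8) first and then observe that 3)--5) and 1)--2) are the same computation with the obvious modifications.

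For $T_2$, I would write
\[
T_2(a,b)^k=\sum_{x_1,\dots,x_k\in\Fq}\zp^{\Tn(a\sum_{i=1}^{k}x_i^{d_1}+b\sum_{i=1}^{k}x_i^{d_2})},
\]
sum over $a,b\in\Fq$, and use $\sum_{a\in\Fq}\zp^{\Tn(at)}=q$ when $t=0$ and $0$ otherwise (and likewise for the sum over $b$). This leaves
\[
\sum_{a,b\in\Fq}T_2(a,b)^k=q^{2}\cdot\#\Big\{(x_1,\dots,x_k)\in\Fq^{k}:\ {\textstyle\sum_{i}x_i^{d_1}=0,\ \sum_{i}x_i^{d_2}=0}\Big\}.
\]
For $k=1$ the only solution is $x_1=0$, giving $q^{2}=p^{2n}$, which is item 6); for $k=2$ and $k=3$ the counts on the right are $\Nt$ and $\Nth$ by definition, which are items 7) and 8). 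Items 3), 4), 5) are the verbatim same computation with $p=2$, $\zp$ replaced by $-1$, the field being $\Ftn$, and the solution counts being $N_2(2^n,d_1,d_2)$ and $N_3(2^n,d_1,d_2)$.

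For $S$ in items 1) and 2), the only additional observation is that $x^{2^m+1}=x\cdot x^{2^m}$ is the norm of $x$ from $\Ftn$ down to $\Ftm$, so that $\sum_{i}x_i^{2^m+1}\in\Ftm$; hence summing over $a\in\Ftm$ legitimately gives $\sum_{a\in\Ftm}(-1)^{\Tm(at)}=2^{m}$ for $t=0$ and $0$ otherwise, while summing over $b\in\Ftn$ produces the factor $2^{n}$ on $\{\sum_{i}x_i^{d_2}=0\}$. Expanding $S(a,b)^k$ and summing therefore yields
\[
\sum_{a\in\Ftm}\sum_{b\in\Ftn}S(a,b)^k=2^{m}\cdot 2^{n}\cdot\#\Big\{(x_1,\dots,x_k)\in\Ftn^{k}:\ {\textstyle\sum_{i}x_i^{2^m+1}=0,\ \sum_{i}x_i^{d_2}=0}\Big\},
\]
which is $2^{3m}$ for $k=1$ (item 1) and $2^{3m}N_2(2^n,2^m+1,d_2)$ for $k=2$ (item 2), on reading $2^m+1$ as the first exponent $d_1$ in the definition of $N_2$.

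The computation is routine and I do not expect a genuine obstacle. The only points needing care are the bookkeeping of which field each outer variable ranges over, and the verification that $x\mapsto x^{2^m+1}$ has image in $\Ftm$, so that character orthogonality over $\Ftm$ is applicable in items 1) and 2).
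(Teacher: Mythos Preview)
Your proposal is correct and is precisely the standard character-orthogonality computation that the paper has in mind: the paper omits the proof entirely, declaring it ``routine and analogous to that of \cite[Lemma 4]{LTW}'', and that reference uses exactly the expand--interchange--orthogonality mechanism you describe. Your extra remark that $x^{2^m+1}$ lands in $\Ftm$ (so that orthogonality over $\Ftm$ applies when summing over $a$) is the one point specific to items 1) and 2), and you have handled it correctly.
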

\begin{proof}
The proof is routine and analogous to that of \cite[Lemma 4]{LTW}. So we omit it here.
\end{proof}

Consequently, precise information of these moment identities is available if we can count the number of solutions of certain equation systems.

\section{Binary Cyclic Codes With Niho Exponents}\label{sec3}

Considering a Niho exponent $d=s(2^m-1)+1$, it is straightforward to verify that
$$
cl(d)=\begin{cases}
              m  & \text{if $s\equiv \frac12 \pmod{2^m+1}$}, \\
              n  & \text{otherwise},
              \end{cases}
$$
where $\frac12$ represents the inverse of $2$ modulo $2^m+1$.

This section concerns the weight distributions of binary cyclic codes with Niho exponents. The first part studies the weight distribution of $\cC_{2^n,d_1,d_2}^{\perp}$ with $cl(d_1)=m$ and $cl(d_2)=n$. For this purpose, we compute the value distribution of $S(a,b)$. In the second part, we consider the weight distribution of $\cC_{2^n,d_1,d_2}^{\perp}$ with $cl(d_1)=cl(d_2)=n$. By imposing some specific conditions on $d_1$ and $d_2$, we obtain the value distribution of $T_1(a,b)$. Thus, the weight distribution of related cyclic codes follows immediately.

\subsection{The Value Distribution of $S(a,b)$ and Related Cyclic Codes}

Throughout this subsection, we consider the value distribution of $S(a,b)$ with $d_2=s_2(2^m-1)+1$. To ensure that $2^m+1$ and $d_2$ are not equivalent, we have $s_2 \not\equiv \frac12 \pmod {2^m+1}$. As a preparation, we have the following lemma.

\begin{lemma}\label{N2}
Suppose $q=2^n$ and $l=(2s_2-1,2^m+1)$. Then $N_2(q,2^m+1,d_2)=(2^n-1)l+1$.
\end{lemma}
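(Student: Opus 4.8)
The plan is to count the solutions of the defining system directly, by passing to a polar-type parametrisation $y=tx$. Recall that $N_2(q,2^m+1,d_2)$ is the number of pairs $(x,y)\in\Fq^2$ satisfying $x^{2^m+1}+y^{2^m+1}=0$ and $x^{d_2}+y^{d_2}=0$; since $p=2$ these are $x^{2^m+1}=y^{2^m+1}$ and $x^{d_2}=y^{d_2}$. First I would handle the degenerate cases: if $x=0$ the first equation gives $y^{2^m+1}=0$, hence $y=0$, and symmetrically, so $(0,0)$ is the unique solution with $xy=0$. It then remains to count the solutions with $x,y\in\Fq^*$.

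For such a solution, write $y=tx$ with $t\in\Fq^*$ uniquely determined by $(x,y)$. The first equation becomes $x^{2^m+1}(1+t^{2^m+1})=0$, i.e.\ $t^{2^m+1}=1$, which is precisely the condition $t\in S$. The crucial simplification is that on $S$ we have $\bar t=t^{2^m}$ and $t\bar t=1$, so $t^{2^m}=t^{-1}$ and therefore $t^{2^m-1}=t^{-2}$. Using $d_2=s_2(2^m-1)+1$, this gives $t^{d_2}=(t^{2^m-1})^{s_2}t=t^{-2s_2}t=t^{-(2s_2-1)}$, so the second equation $x^{d_2}(1+t^{d_2})=0$ reduces to $t^{2s_2-1}=1$.

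Hence the solutions with $x,y\neq0$ are exactly the pairs $(x,tx)$ with $x\in\Fq^*$ arbitrary and $t$ in the subgroup $\{t\in S\mid t^{2s_2-1}=1\}$. Since $S$ is cyclic of order $2^m+1$, this subgroup has order $\gcd(2s_2-1,2^m+1)=l$, giving $(2^n-1)l$ solutions; adding $(0,0)$ yields $N_2(q,2^m+1,d_2)=(2^n-1)l+1$, as claimed. This argument is essentially routine — the only step requiring a little care is the exponent reduction $t^{d_2}=t^{-(2s_2-1)}$ valid for $t\in S$, which is exactly what converts the Niho exponent $d_2$ into the simple gcd condition; everything else is elementary counting in the cyclic group $S$.
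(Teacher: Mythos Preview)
Your proof is correct and follows essentially the same approach as the paper: both reduce to counting $t\in S$ with $t^{d_2}=1$, which has $l=\gcd(2s_2-1,2^m+1)$ solutions. The only cosmetic difference is that the paper observes directly that $\gcd(d_2,2^m+1)=l$, whereas you first rewrite $t^{d_2}=t^{-(2s_2-1)}$ on $S$; the content is identical.
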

\begin{proof}
By definition, $N_2(q,2^m+1,d_2)$ is the number of solutions to the equations
\begin{equation}\label{N2eqn1}
\left\{\begin{array}{c}
             x^{2^m+1}+y^{2^m+1}=0 \\
             x^{d_2}+y^{d_2}=0
             \end{array}
             \right., \quad x,y \in \Fq.
\end{equation}
When $y=0$, we have one solution $(x,y)=(0,0)$. When $y \in \Ftn^*$, by setting $z=\frac{x}{y}$, we only need to consider the system
\begin{equation}\label{N2eqn2}
\left\{\begin{array}{c}
             z^{2^m+1}=1 \\
             z^{d_2}=1
             \end{array}
             \right., \quad z \in \Fq.
\end{equation}
Each solution of (\ref{N2eqn2}) corresponds to $2^n-1$ solutions of (\ref{N2eqn1}). Since $l=(2s_2-1,2^m+1)=(d_2,2^m+1)$, (\ref{N2eqn2}) is equivalent to $z^{l}=1$, which has exactly $l$ solutions in $\Fq$. Hence, we deduce that $N_2(q,2^m+1,d_2)=(2^n-1)l+1$.
\end{proof}

We are now ready to determine the value distribution of
$$
S(a,b)=\sum_{x\in \Ftn} (-1)^{\Tm(ax^{2^m+1})+\Tn(bx^{d_2})}.
$$

\begin{theorem}\label{vdbinarythree}
Assume $n=2m$ with $m \ge 1$. Define $d_2=s_2(2^m-1)+1$ with $s_2 \not\equiv \frac12 \pmod {2^m+1}$. Set $q=2^n$ and $l=(2s_2-1,2^m+1)$. Then the value distribution of $S(a,b)$ is listed in Table \ref{table1}.
\end{theorem}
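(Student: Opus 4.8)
The plan is to reduce the statement to a counting problem for the quantity $U(a,b)$ from Lemma~\ref{Niho}, determine its small set of possible values, and then read off the frequencies entirely from the moment identities. First I would apply part~1) of Lemma~\ref{Niho} to write $S(a,b)=(U(a,b)-1)2^m$, so that determining the value distribution of $S(a,b)$ over $(a,b)\in\Ftm\times\Ftn$ is equivalent to determining that of $U(a,b)$, the number of $z\in S$ with $\ba{b}z^{2(2s_2-1)}+a^{\frac12}z^{2s_2-1}+b=0$. Note that the index set has size $2^m\cdot 2^n=2^{3m}$, which will match the first moment in Lemma~\ref{moment}.

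Next I would pin down the possible values of $U(a,b)$. If $b=0$, the defining equation becomes $a^{\frac12}z^{2s_2-1}=0$, which holds for every $z\in S$ when $a=0$ and for no $z\in S$ when $a\ne 0$; hence $U(0,0)=2^m+1$, giving $S(0,0)=2^n$, and $U(a,0)=0$ for $a\ne 0$, giving $S(a,0)=-2^m$. If $b\ne 0$, I would substitute $w=z^{2s_2-1}$. Since $S$ is cyclic of order $2^m+1$ and $\gcd(2s_2-1,2^m+1)=l$, the map $z\mapsto z^{2s_2-1}$ is $l$-to-$1$ from $S$ onto the subgroup $S_l$, so
$$
U(a,b)=l\cdot\bigl|\{w\in S_l:\ \ba{b}w^2+a^{\frac12}w+b=0\}\bigr|.
$$
The quadratic $\ba{b}w^2+a^{\frac12}w+b=0$ has at most two roots in $\Ftn$, so $U(a,b)\in\{0,l,2l\}$ when $b\ne 0$. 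Because $s_2\not\equiv\frac12\pmod{2^m+1}$ we have $l\mid 2^m+1$ with $l<2^m+1$, and since $2^m+1$ is odd this forces $2l\le 2(2^m+1)/3<2^m+1$; hence $U(a,b)=2^m+1$ occurs only at $(a,b)=(0,0)$. Consequently the only possible values of $S(a,b)$ are $2^n$, $-2^m$, $(l-1)2^m$, $(2l-1)2^m$, and the bound $l<2^m+1$ makes these four numbers pairwise distinct.

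To obtain the frequencies, let $g_0,g_l,g_{2l}$ be the numbers of pairs $(a,b)$ with $U(a,b)=0,l,2l$ respectively (the frequency of the value $2^m+1$ being $1$, attained only at $(0,0)$). Counting the index set together with the first and second moment identities of Lemma~\ref{moment} gives the linear system
\begin{align*}
g_0+g_l+g_{2l}&=2^{3m}-1,\\
-g_0+(l-1)g_l+(2l-1)g_{2l}&=2^{2m}-2^m,\\
g_0+(l-1)^2g_l+(2l-1)^2g_{2l}&=2^m\bigl(N_2(2^n,2^m+1,d_2)-2^m\bigr),
\end{align*}
into which one substitutes $N_2(2^n,2^m+1,d_2)=(2^n-1)l+1$ from Lemma~\ref{N2}. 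The coefficient matrix is essentially a Vandermonde matrix in the values $\{-1,l-1,2l-1\}$ of $U-1$, hence invertible, so the system has a unique solution; solving it yields closed forms for $g_0,g_l,g_{2l}$ in terms of $m$ and $l$, which are precisely the entries of Table~\ref{table1}.

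The step that needs the most care is the second one: setting up the $l$-to-$1$ correspondence between $S$ and $S_l$ cleanly, and, above all, using $s_2\not\equiv\frac12\pmod{2^m+1}$ both to exclude the value $2^m+1$ when $b\ne 0$ and to guarantee the four candidate values are distinct, so that Table~\ref{table1} genuinely records a four-valued distribution. Once the value set is fixed, the rest is bookkeeping: Lemma~\ref{moment} and Lemma~\ref{N2} are exactly the inputs required to reduce the frequency computation to the small linear system above.
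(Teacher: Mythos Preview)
Your proposal is correct and follows essentially the same route as the paper's proof: apply part~1) of Lemma~\ref{Niho}, substitute $w=z^{2s_2-1}$ to reduce to a quadratic over $S_l$, conclude $U(a,b)\in\{0,l,2l\}$ for $(a,b)\ne(0,0)$, and recover the frequencies from the first two moment identities in Lemma~\ref{moment} together with Lemma~\ref{N2}. The only difference is that you fill in a few details the paper leaves implicit (the separate treatment of $b=0$, the check that $2l<2^m+1$ so the four values are genuinely distinct, and the explicit Vandermonde system); these are welcome but do not constitute a different argument.
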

\begin{proof}
By 1) of Lemma~\ref{Niho}, we have $S(a,b)=(U(a,b)-1)2^m$, where $U(a,b)$ is the number of $z \in S$ satisfying
$$
\ba{b}z^{2(2s_2-1)}+a^{\frac12}z^{2s_2-1}+b=0.
$$
When $(a,b)=(0,0)$, it is easy to see that $U(a,b)=2^m+1$ and $S(a,b)$ takes the trivial value $2^{2m}$. Below, we consider the case $(a,b)\ne (0,0)$. Setting $u=z^{2s_2-1}$, the equation becomes
$$
\ba{b}u^{2}+a^{\frac12}u+b=0,
$$
which has either $0,1$ or $2$ solutions in $S_l$. Since $l=(2s_2-1,2^m+1)$, for any $u \in S_l$, the equation $z^{2s_2-1}=u$ has exactly $l$ solutions in $S$. Therefore, we have $U(a,b)\in \{0,l,2l\}$ when $(a,b)\ne (0,0)$. Consequently, $S(a,b)$ takes three distinct values $\{-2^m,(l-1)2^m,(2l-1)2^m\}$ when $(a,b)\ne (0,0)$. The corresponding frequencies of these values can be obtained from Lemma~\ref{moment} and Lemma~\ref{N2}. The proof is now complete and we list the value distribution in Table \ref{table1}.
\end{proof}

\begin{table*}
\begin{center}
\ra{1.8}
\caption{Value distribution of Theorem~\ref{vdbinarythree}}\label{table1}
\begin{tabular}{|c|c|}
\hline
Value  &  Frequency \\ \hline
$2^{2m}$     &  $1$ \\ \hline
$(2l-1)2^{m}$ & $\frac{(2^{2m}-1)(2^m-l+1)}{2l^2}$ \\ \hline
$(l-1)2^{m}$ & $\frac{(2^{2m}-1)((2^m+2)l-2^m-1)}{l^2}$ \\ \hline
$-2^{m}$ & $2^{3m}-1+\frac{(2^{2m}-1)(2^{m}+1-(2^{m+1}+3)l)}{2l^2}$ \\ \hline
\end{tabular}
\end{center}
\end{table*}

As a direct consequence of Theorem~\ref{vdbinarythree}, we obtain the weight distribution of a class of binary cyclic codes.

\begin{theorem}\label{thmbinarythree}
Assume $n=2m$ with $m \ge 1$. Define $d_1=2^m+1$ and $d_2=s_2(2^m-1)+1$ with $s_2 \not\equiv \frac12 \pmod {2^m+1}$. Set $q=2^n$ and $l=(2s_2-1,2^m+1)$. Then $\code^{\perp}$ is a $[2^n-1,3m,2^{2m-1}-(2l-1)2^{m-1}]$ binary code. Its weight distribution is listed in Table \ref{table2}.
\end{theorem}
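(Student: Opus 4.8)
The plan is to read Table~\ref{table2} off from Theorem~\ref{vdbinarythree} via Delsarte's trace representation of $\co$, so that essentially no new analytic work is needed. By Delsarte's Theorem the codewords of $\co$ are $c(a,b)=(\Tn(a\theta^{id_1}+b\theta^{id_2}))_{i=0}^{q-2}$, $a,b\in\Fq$, and the weight formula recalled in Section~\ref{sec2}, specialized to $p=2$, reads
$$
w_H(c(a,b))=\frac{q-1}{2}-\frac12\sum_{x\in\Fq^*}(-1)^{\Tn(ax^{d_1}+bx^{d_2})}.
$$
Since $d_1=2^m+1$ we have $x^{d_1}=x\x\in\Ftm$ for every $x\in\Ftn$, so by transitivity of the trace $\Tn(ax^{d_1})=\Tm(\Tr_{n/m}(a)\,x^{d_1})$; adding back the contribution of $x=0$ then gives $\sum_{x\in\Fq^*}(-1)^{\Tn(ax^{d_1}+bx^{d_2})}=S(\Tr_{n/m}(a),b)-1$, whence
$$
w_H(c(a,b))=2^{2m-1}-\tfrac12\,S(\Tr_{n/m}(a),b).
$$

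Next I would set up the correspondence that turns the frequency table of $S$ into the weight enumerator of $\co$. A short cyclotomic-coset computation gives $cl(d_1)=m$: the $2$-orbit of $2^m+1$ modulo $2^n-1$ is $\{2^{m+k}+2^k\mid 0\le k<m\}$, of size exactly $m$, since each of these residues has its two binary ones in distinct positions and $2^m(2^m+1)\equiv 2^m+1$. Meanwhile $cl(d_2)=n$ by the hypothesis $s_2\not\equiv\frac12\pmod{2^m+1}$ and the displayed formula for $cl(d)$ at the start of Section~\ref{sec3}. As the two cyclotomic cosets have different sizes, the minimal polynomials $g_{d_1},g_{d_2}$ are distinct, so $\dim\co=cl(d_1)+cl(d_2)=3m$ and $|\co|=2^{3m}$. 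On the other hand the last display shows that $c(a,b)$ depends on $a$ only through $\bar a=\Tr_{n/m}(a)$, so the assignment $(\bar a,b)\in\Ftm\times\Ftn\mapsto c(a,b)$ is a well-defined surjection (by Delsarte, since $\Tr_{n/m}$ is onto $\Ftm$) between two sets of cardinality $2^{3m}$, hence a bijection. Therefore the number of codewords of weight $w$ equals the number of $(\bar a,b)\in\Ftm\times\Ftn$ with $S(\bar a,b)=2^{2m}-2w$, i.e. precisely the frequency of the value $2^{2m}-2w$ in Table~\ref{table1}.

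It then only remains to apply the substitution $v\mapsto w=2^{2m-1}-v/2$ to Table~\ref{table1}: the value $v=2^{2m}$ yields the zero codeword, while $v\in\{(2l-1)2^m,(l-1)2^m,-2^m\}$ yield the three nonzero weights $2^{2m-1}-(2l-1)2^{m-1}$, $2^{2m-1}-(l-1)2^{m-1}$ and $2^{2m-1}+2^{m-1}$, with the frequencies carried over verbatim; this produces Table~\ref{table2} and the parameters $[2^n-1,3m,\,\cdot\,]$. For the minimum distance I would note that, since $l\ge 1$, we have $2l-1>l-1$, so the smallest of the three weights is $2^{2m-1}-(2l-1)2^{m-1}$; it is strictly positive because $l$ is an odd divisor of $2^m+1$ with $l\ne 2^m+1$, which forces $l\le(2^m+1)/3$ and hence $2l-1<2^m$; and it is actually attained since its frequency $\frac{(2^{2m}-1)(2^m-l+1)}{2l^2}$ is strictly positive. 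I do not expect a genuine obstacle here: all the real difficulty is already contained in Theorem~\ref{vdbinarythree}, and the only point that warrants care is verifying that the parametrization $(\bar a,b)\mapsto c(a,b)$ is one-to-one, so that the value distribution of $S$ transfers to the weight distribution of $\co$ with no multiplicity corrections.
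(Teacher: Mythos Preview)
Your proposal is correct and follows exactly the route the paper intends: the paper states Theorem~\ref{thmbinarythree} as a ``direct consequence'' of Theorem~\ref{vdbinarythree} without further argument, and you have carefully supplied the missing bridge, namely the identity $w_H(c(a,b))=2^{2m-1}-\tfrac12 S(\Tr_{n/m}(a),b)$ together with the bijection $(\bar a,b)\in\Ftm\times\Ftn\leftrightarrow c(a,b)\in\co$ justified by the dimension count $cl(d_1)+cl(d_2)=3m$. Your verification that the minimum weight is positive and attained (using $l\mid 2^m+1$, $l\ne 2^m+1$, hence $l\le (2^m+1)/3$) is a nice extra detail that the paper omits.
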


\begin{table*}
\begin{center}
\ra{1.8}
\caption{Weight distribution of Theorem~\ref{thmbinarythree}}\label{table2}
\begin{tabular}{|c|c|}
\hline
Weight  &  Frequency \\ \hline
$0$     &  $1$ \\ \hline
$2^{2m-1}-(2l-1)2^{m-1}$ & $\frac{(2^{2m}-1)(2^m-l+1)}{2l^2}$ \\ \hline
$2^{2m-1}-(l-1)2^{m-1}$ & $\frac{(2^{2m}-1)((2^m+2)l-2^m-1)}{l^2}$ \\ \hline
$2^{2m-1}+2^{m-1}$ & $2^{3m}-1+\frac{(2^{2m}-1)(2^{m}+1-(2^{m+1}+3)l)}{2l^2}$ \\ \hline
\end{tabular}
\end{center}
\end{table*}


Given $m$, the code is determined by one parameter $s_2$. From now on, we refer the code table as the one maintained by Grassl \cite{Gra}. We present some examples concerning the weight distributions of the cyclic codes derived from the above theorem. According to the code table, some of them are optimal linear codes.

\begin{example}
When $m=2$, we have $s_2 \in \{1,2\}$. Then $l=(2s_2-1,2^m+1)=1$ for both choices of $s_2$. The corresponding two cyclic codes are $[15,6,6]$ binary codes with the same weight distribution:
$$
1+30x^6+15x^8+18x^{10}.
$$
Referring to the code table \cite{Gra}, our cyclic codes are optimal.
\end{example}

\begin{example}
When $m=3$, we have $s_2 \in \{1,2,3,4\}$. Furthermore, we have $l=(2s_2-1,2^m+1)=1$ for $s_2 \in \{1,3,4\}$. The corresponding three cyclic codes are $[63,9,28]$ binary codes with the same weight distribution:
$$
1+252x^{28}+63x^{32}+196x^{36}.
$$
Referring to the code table \cite{Gra}, our cyclic codes are optimal.
\end{example}

\subsection{The Value Distribution of $T_1(a,b)$ and Related Cyclic Codes}

Now, we compute the value distribution of $T_1(a,b)$ in one special case. Throughout this subsection, we fix $d_1=s_1(2^m-1)+1$ and $d_2=s_2(2^m-1)+1$ where $s_1=2^{k-1}t-\frac{t-1}{2}$ and $s_2=2^{k-1}t+\frac{t+1}{2}$ for some positive integer $k$ and some odd number $t \ge 1$. To ensure that $d_1, d_2$ are not equivalent and $cl(d_1)=cl(d_2)=n$, we have $(2^k-1)t,(2^k+1)t \not\equiv 0 \pmod{2^m+1}$. We call two pairs of Niho exponents $(d_1,d_2)$ and $(d_1^{\prime},d_2^{\prime})$ {\em equivalent} if $(d_1,d_1^{\prime})$ and $(d_2,d_2^{\prime})$ are pairwise equivalent or $(d_1,d_2^{\prime})$ and $(d_2,d_1^{\prime})$ are pairwise equivalent. Set $s_1=2^{k-1}t-\frac{t-1}{2}$, $s_2=2^{k-1}t+\frac{t+1}{2}$, $s_1^{\prime}=2^{k+m-1}t-\frac{t-1}{2}$ and $s_2^{\prime}=2^{k+m-1}t+\frac{t+1}{2}$. It is easy to see that $s_1+s_2^{\prime} \equiv 1 \pmod {2^m+1}$ and $s_1^{\prime}+s_2 \equiv 1 \pmod {2^m+1}$. Namely, $k$ and $k+m$ produce two equivalent pairs of Niho exponents. Thus, we can restrict $k$ in the range $1 \le k \le m$. A similar analysis shows that we can assume $1\le t \le 2^m+1$ without loss of generality. Below, we will determine the value distribution of $T_1(a,b)$ with some more conditions imposed.

As a preparation, we have the following lemma.

\begin{lemma}\label{N3}
Suppose $q=2^n$ and $l=(t,2^m+1)$. Then
\begin{enumerate}
\item[1)] $\Nt=(2^n-1)l+1$.
\item[2)] $\Nth=(2^m-2)(2^{n}-1)l^2+3(2^{n}-1)l+1$.
\end{enumerate}
\end{lemma}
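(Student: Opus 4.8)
The plan is to count the two solution sets directly, relying on the polar representation of $\Ftn^*$ and on the explicit shapes $2s_1-1=t(2^k-1)$, $2s_2-1=t(2^k+1)$ coming from the definitions of $s_1,s_2$. For part 1), a solution of the $N_2$-system with $y=0$ forces $x=0$, giving the single solution $(0,0)$; for $y\ne 0$ the substitution $w=x/y$ reduces the system to $w^{d_1}=w^{d_2}=1$, which has exactly $\gcd(d_1,d_2,2^n-1)$ solutions in $\Ftn^*$, each producing $2^n-1$ solutions $(wy,y)$. Hence everything reduces to the claim $\gcd(d_1,d_2,2^n-1)=l$. Since $2^n-1=(2^m-1)(2^m+1)$ with coprime factors, I would treat the two factors separately: both $d_i\equiv 1\pmod{2^m-1}$, while modulo $2^m+1$ one has $d_1\equiv-(2s_1-1)=-t(2^k-1)$ and $d_2\equiv-(2s_2-1)=-t(2^k+1)$; as $\gcd(2^k-1,2^k+1)=1$, the gcd collapses to $\gcd(t,2^m+1)=l$. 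This gives $\Nt=(2^n-1)l+1$.

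For part 2), I would split the solutions $(x,y,z)$ of the $N_3$-system according to which coordinates vanish. Fixing any one coordinate to be $0$ leaves precisely the $N_2$-system, counted by $\Nt$, and the only solution with two zero coordinates is $(0,0,0)$; by inclusion--exclusion there are $3\Nt-2=3(2^n-1)l+1$ solutions with at least one zero coordinate. For the solutions with $x,y,z\in\Ftn^*$, putting $u=x/z$ and $v=y/z$ gives a bijection with the pairs $\bigl(z,(u,v)\bigr)$ where $z\in\Ftn^*$ and $(u,v)\in(\Ftn^*)^2$ satisfies $u^{d_1}+v^{d_1}=1$ and $u^{d_2}+v^{d_2}=1$; so the number of all-nonzero solutions equals $(2^n-1)M$, where $M$ is the number of such pairs, and the task becomes to prove $M=(2^m-2)l^2$.

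To evaluate $M$, write $u=ab$ and $v=cd$ in polar form, $a,c\in\Ftm^*$ and $b,d\in S$, so that $u^{d_i}=ab^{\,1-2s_i}$ and $v^{d_i}=cd^{\,1-2s_i}$ with $1-2s_1=-t(2^k-1)$, $1-2s_2=-t(2^k+1)$. Regarded as a linear system in $(a,c)$ over $\Ftn$, the two equations have determinant equal to a nonzero multiple of $(b^t+d^t)^2$. If $b^t\ne d^t$, there is a unique solution $(a_0,c_0)$ over $\Ftn$; reading off $a_0,c_0$ by Cramer's rule and applying $x\mapsto x^{2^m}$ (which inverts elements of $S$ and fixes the right-hand sides $1\in\Ftm$) shows $a_0\ne 0\iff d^t\ne 1$, $c_0\ne 0\iff b^t\ne 1$, while $a_0\in\Ftm\iff b^t=1$ and $c_0\in\Ftm\iff d^t=1$; these four conditions cannot hold at once, so this case contributes nothing. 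If $b^t=d^t$, both equations collapse to $a+c=(b^t)^{2^k-1}$ and $a+c=(b^t)^{2^k+1}$, forcing $(b^t)^2=1$, hence $b^t=1$ (as $|S|=2^m+1$ is odd), so also $d^t=1$ and $a+c=1$. Since $b$ and $d$ then range independently over the $l$ solutions of $w^t=1$ in $S$ and $(a,c)$ ranges over the $2^m-2$ solutions of $a+c=1$ in $(\Ftm^*)^2$, we obtain $M=(2^m-2)l^2$. Combining the three contributions gives $\Nth=(2^m-2)(2^n-1)l^2+3(2^n-1)l+1$.

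The main obstacle is the subcase $b^t\ne d^t$: it requires manipulating the explicit Cramer expressions for $(a_0,c_0)$ together with the conjugation action on $S$, while keeping track of the coprimality facts $\gcd(2^k-1,2^k+1)=1$ and $\gcd(2^{k+1},2^m+1)=1$ and of the standing hypotheses $(2^k\pm1)t\not\equiv 0\pmod{2^m+1}$, which rule out the degeneracies $b^{\,1-2s_i}\equiv 1$. A secondary point requiring care is the bookkeeping in the remaining subcase that produces exactly the factor $l^2$, from the two independent conditions $b^t=1$ and $d^t=1$, together with the elementary count $\#\{a+c=1:a,c\in\Ftm^*\}=2^m-2$.
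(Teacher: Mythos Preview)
Your proposal is correct and follows essentially the same approach as the paper's proof: reduce $N_2$ to counting $\gcd(d_1,d_2,2^n-1)=l$, and for $N_3$ use the polar decomposition of $u,v$, analyze the resulting $2\times 2$ linear system in the $\Ftm^*$-components via its determinant, and show the nondegenerate case is empty by Cramer plus conjugation while the degenerate case forces $b^t=d^t=1$ and $a+c=1$. The only cosmetic differences are that you handle the zero-coordinate solutions by inclusion--exclusion (the paper instead treats $z=0$ first and then $u=0$ or $v=0$ inside the $z\ne 0$ case), and you prove $\gcd(d_1,d_2,2^n-1)=l$ directly from $\gcd(2^k-1,2^k+1)=1$ rather than arguing that one of $(d_i,2^n-1)$ already equals $l$.
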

\begin{proof}
1) Note that
$$
(d_1,2^n-1)=((2^k-1)t,2^m+1)
$$
and
$$
(d_2,2^n-1)=((2^k+1)t,2^m+1).
$$
Thus, $l$ divides both $(d_1,2^n-1)$ and $(d_2,2^n-1)$. Moreover, we have either $(d_1,2^n-1)=l$ or $(d_2,2^n-1)=l$. Hence, the system
$$
\left\{\begin{array}{c}
             u^{d_1}=1 \\
             u^{d_2}=1
             \end{array}
\right.
$$
has exactly $l$ solutions. Following the same spirit of the proof in Lemma~\ref{N2}, the remaining part is routine. \\
2) By definition, $\Nth$ is the number of solutions to the equations
\begin{equation}\label{N3eqn1}
\left\{\begin{array}{c}
             x^{d_1}+y^{d_1}+z^{d_1}=0 \\
             x^{d_2}+y^{d_2}+z^{d_2}=0
             \end{array}
\right., \quad x,y,z \in \Fq.
\end{equation}
When $z=0$, there are $\Nt=(2^n-1)l+1$ solutions.

When $z\ne 0$, the situation is more involved. By setting $u=\frac{x}{z}$ and $v=\frac{y}{z}$, we only need to consider the system
\begin{equation}\label{sysbianry}
\left\{\begin{array}{c}
             u^{d_1}+v^{d_1}=1 \\
             u^{d_2}+v^{d_2}=1
             \end{array}
\right., \quad u,v \in \Fq.
\end{equation}
Each solution of (\ref{sysbianry}) corresponds to $2^n-1$ solutions of (\ref{N3eqn1}). If $u=0$ or $v=0$, by the proof of 1), the system (\ref{sysbianry}) has exactly $l$ solutions.
If $uv \ne 0$, by the polar representation, $u$ and $v$ can be uniquely expressed as $u=\al\de$ and $v=\be\ga$, where $\al,\be \in \Ftm^*$ and $\de,\ga \in S$. Thus the system (\ref{sysbianry}) is equivalent to
\begin{equation}\label{sysbinary2}
\left\{\begin{array}{c}
  \al\de^{-t(2^k-1)}+\be\ga^{-t(2^k-1)}=1 \\
  \al\de^{-t(2^k+1)}+\be\ga^{-t(2^k+1)}=1
\end{array}
\right..
\end{equation}
Note that
$$
\Delta=
\begin{vmatrix}
  \de^{-t(2^k-1)} & \ga^{-t(2^k-1)}\\
  \de^{-t(2^k+1)} & \ga^{-t(2^k+1)}
\end{vmatrix}
=\de^{-t(2^k-1)}\ga^{-t(2^k+1)}-\de^{-t(2^k+1)}\ga^{-t(2^k-1)}.
$$
Below, we split our discussion into two cases.

If $\Delta=0$, we have $\de^t=\ga^t$. Comparing with (\ref{sysbinary2}), we have $\de^t=\ga^t=1$ and the system (\ref{sysbinary2}) degenerates to $\al+\be=1$. Note that there are $l^2$ pairs of $(\de,\ga)$ such that $\de^t=\ga^t=1$. Moreover, for each pair $(\de,\ga)$, there are $2^m-2$ pairs of $(\al,\be)$, such that $\al+\be=1$ and $\al\be \ne 0$. Hence, there are $(2^m-2)l^2$ solutions in this case.

If $\Delta \ne 0$, i.e., $\de^t \ne \ga^t$, solving the system (\ref{sysbinary2}) yields
\begin{align*}
\al&=\frac{1+\ga^{2t}}{\de^{-t(2^k-1)}(1+\de^{-2t}\ga^{2t})},\\ \be&=\frac{1+\de^{2t}}{\ga^{-t(2^k-1)}(1+\de^{2t}\ga^{-2t})}.
\end{align*}
We are going to show that no solution exists in this case. Since $\al\in \Ftm^*$, we have $\al=\ba{\al}$, which leads to
$\de^t=1$. Similarly, since $\be \in \Ftm^*$, we obtain $\ga^t=1$. Thus, we have $\de^t=\ga^t=1$, which contradicts to $\Delta \ne 0$. Therefore, there is no solution when $\Delta \ne 0$.

To sum up, we deduce that $\Nth=(2^n-1)l+1+(2^n-1)((2^m-2)l^2+2l)=(2^m-2)(2^n-1)l^2+3(2^n-1)l+1$.
\end{proof}

The following lemma due to Dobbertin et al. \cite{DFHR} describes the possible number of solutions to certain equation.

\begin{lemma}{\cite[Lemma 22]{DFHR}}\label{lemmaDob}
For $a,b,c \in \Ftn$, the equation
$$
x^{2^r+1}+ax^{2^r}+bx+c=0
$$
has either $0,1,2$ or $2^{r_0}+1$ solutions in $\Ftn$, where $r_0=(r,n)$.
\end{lemma}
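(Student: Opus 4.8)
The plan is to convert this degree-$(2^r+1)$ equation into an affine \emph{linearized} equation over $\Ftn$, whose number of solutions is controlled by the kernel of an $\Ft$-linear map. If $f(x):=x^{2^r+1}+ax^{2^r}+bx+c=0$ has no solution in $\Ftn$ there is nothing to prove, so fix a solution $x_0\in\Ftn$. Substituting $x=x_0+y$ and using $(x_0+y)^{2^r+1}=x_0^{2^r+1}+x_0^{2^r}y+x_0y^{2^r}+y^{2^r+1}$ in characteristic $2$, a short computation (together with $f(x_0)=0$) gives
$$
f(x_0+y)=y\bigl(y^{2^r}+(x_0+a)y^{2^r-1}+(x_0^{2^r}+b)\bigr).
$$
Hence the number of solutions of $f=0$ equals $1+\#\{y\in\Ftn^*:g(y)=0\}$, where $g(y):=y^{2^r}+(x_0+a)y^{2^r-1}+(x_0^{2^r}+b)$.

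Next I would handle the degenerate cases. If $x_0^{2^r}+b=0$, then $g(y)=y^{2^r-1}\bigl(y+x_0+a\bigr)$ has at most one nonzero root; if $x_0+a=0$ but $x_0^{2^r}+b\neq0$, then $g(y)=y^{2^r}+(x_0^{2^r}+b)$ has exactly one root because $y\mapsto y^{2^r}$ permutes $\Ftn$; in both situations $f=0$ has at most $2$ solutions. In the remaining case $x_0+a\neq0$ and $x_0^{2^r}+b\neq0$, I divide $g(y)=0$ by $y^{2^r}$ and set $w=y^{-1}$, which turns it into the affine linearized equation
$$
L(w):=(x_0^{2^r}+b)\,w^{2^r}+(x_0+a)\,w=1.
$$
Since $L(0)\neq1$, inversion $y\leftrightarrow w$ is a bijection between $\{y\in\Ftn^*:g(y)=0\}$ and $\{w\in\Ftn:L(w)=1\}$.

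Finally, as $L$ is $\Ft$-linear, the set $\{w:L(w)=1\}$ is either empty or a coset of $\ker L=\{0\}\cup\{w\in\Ftn^*:w^{2^r-1}=(x_0+a)/(x_0^{2^r}+b)\}$. Because $w\mapsto w^{2^r-1}$ is $(2^r-1,2^n-1)$-to-one onto its image in the cyclic group $\Ftn^*$ and $(2^r-1,2^n-1)=2^{(r,n)}-1=2^{r_0}-1$, we get $|\ker L|\in\{1,2^{r_0}\}$, so $\#\{w:L(w)=1\}\in\{0,1,2^{r_0}\}$, the value $1$ occurring exactly when $L$ is injective (hence bijective). Adding back the solution $x_0$ shows $f=0$ has $1$, $2$ or $2^{r_0}+1$ solutions, which together with the empty case yields the stated list $0,1,2,2^{r_0}+1$. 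The only genuinely non-mechanical ingredient is the identity $(2^r-1,2^n-1)=2^{(r,n)}-1$ combined with the coset structure of a linear map's fibres; the translate-and-invert step, which linearizes the monomial $x^{2^r+1}$, is the conceptual heart and I expect no real obstacle beyond bookkeeping in the degenerate subcases. (Since the statement is quoted from \cite[Lemma~22]{DFHR}, in the paper one simply invokes it.)
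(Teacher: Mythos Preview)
Your proof is correct and self-contained; the translate-by-a-root step, the factoring $f(x_0+y)=y\bigl(y^{2^r}+(x_0+a)y^{2^r-1}+(x_0^{2^r}+b)\bigr)$, the inversion $w=y^{-1}$ that produces the affine $\Ft$-linear equation $L(w)=1$, and the kernel-size computation via $\gcd(2^r-1,2^n-1)=2^{(r,n)}-1$ all go through exactly as you describe, and the degenerate subcases are handled cleanly.

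As you yourself note, the paper does not supply a proof of this lemma: it is simply quoted from \cite[Lemma~22]{DFHR} and used as a black box in the proof of Theorem~\ref{vdbinaryfour}. So there is no ``paper's own proof'' to compare against here; your argument is a standard (and correct) way to establish the cited result.
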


For any $z \in S$ and $a,b \in \Fq$ with $a\ba{a}+b\ba{b} \ne 0$, we define the fractional linear transformation (FLT) on $S$ as
$$
\Phi_{a,b}(z)=\frac{az+b}{\ba{b}z+\ba{a}}.
$$
It is straightforward to verify that the FLT is well-defined and induces a permutation on $S$. In particular, the composition of two FLTs is also an FLT. More precisely, we have
$$
\Phi_{a_3,b_3}=\Phi_{a_1,b_1}\Phi_{a_2,b_2},
$$
where
$$
\begin{pmatrix} a_3 & b_3 \\ \ba{b}_3 & \ba{a}_3 \end{pmatrix}
=\begin{pmatrix} a_1 & b_1 \\ \ba{b}_1 & \ba{a}_1 \end{pmatrix}
\begin{pmatrix} a_2 & b_2 \\ \ba{b}_2 & \ba{a}_2 \end{pmatrix}
$$
and
$$
a_3\ba{a}_3+b_3\ba{b}_3=(a_1\ba{a}_1+b_1\ba{b}_1)(a_2\ba{a}_2+b_2\ba{b}_2) \ne 0.
$$

Now we proceed to consider the value distribution of
$$
T_1(a,b)=\sum_{x\in \Ftn} (-1)^{\Tn(ax^{d_1}+bx^{d_2})}.
$$

\begin{theorem}\label{vdbinaryfour}
Let $p=2$, $q=2^n$ and $n=2m$ with $m \ge 2$. Given an integer $1 \le k \le m$, set $s_1=2^{k-1}t-\frac{t-1}{2}$ and $s_2=2^{k-1}t+\frac{t+1}{2}$ with odd integer $1 \le t \le 2^m+1$. Define $d_1=s_1(2^m-1)+1$ and $d_2=s_2(2^m-1)+1$. Assume $(2^k-1)t,(2^k+1)t \not\equiv 0 \pmod{2^m+1}$ and $l=(t,2^m+1)$. If one of the following condition holds:
\begin{enumerate}
\item[i)] $m \equiv -1 \pmod k$,
\item[ii)] $(k,2m)=1$,
\end{enumerate}
then the value distribution of $T_1(a,b)$ is listed in Table \ref{table3}.
\end{theorem}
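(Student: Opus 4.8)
The plan is to translate $T_1(a,b)$ into a point count over $S$ via Lemma~\ref{Niho}, to control that count using the fractional linear transformation (FLT) formalism together with Lemma~\ref{lemmaDob}, and finally to recover the frequencies from the power moment identities of Lemma~\ref{moment} and the explicit solution counts of Lemma~\ref{N3}. First, by part 2) of Lemma~\ref{Niho}, $T_1(a,b)=(V(a,b)-1)2^m$, where $V(a,b)$ is the number of $z\in S$ with $\ba{b}z^{2s_2-1}+\ba{a}z^{s_1+s_2-1}+az^{s_2-s_1}+b=0$. For the present choice of $s_1,s_2$ one computes $2s_2-1=(2^k+1)t$, $s_1+s_2-1=2^kt$ and $s_2-s_1=t$, so this equation reads $\ba{b}z^{(2^k+1)t}+\ba{a}z^{2^kt}+az^{t}+b=0$. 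Since it depends on $z$ only through $w:=z^t$, and the $t$-th power map on the cyclic group $S$ of order $2^m+1$ is an $l$-to-one map onto the subgroup $S_l$, one obtains $V(a,b)=l\,N(a,b)$, where $N(a,b)$ is the number of $w\in S_l$ satisfying
$$\ba{b}w^{2^k+1}+\ba{a}w^{2^k}+aw+b=0.$$

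Next I would bound $N(a,b)$. The pair $(a,b)=(0,0)$ yields the trivial value $T_1=2^{2m}$; so assume $(a,b)\ne(0,0)$. If $a\ba{a}+b\ba{b}=0$ then $ab\ne 0$, and writing $b=ca$ with $c\in S$ the equation factors as $(w+c)(\ba{a}\,\ba{c}\,w^{2^k}+a)=0$, which has at most two roots in $S$. If $a\ba{a}+b\ba{b}\ne 0$, then $\Phi_{a,b}$ is a well-defined permutation of $S$ and the equation becomes $w^{2^k}=\Phi_{a,b}(w)$; applying the Frobenius $x\mapsto x^2$ repeatedly and using that a composition of FLTs is an FLT, this iterates to a fixed-point equation $w=\Psi(w)$ for an explicit FLT $\Psi$ built from $2m/(k,2m)$ Frobenius-twists of $\Phi_{a,b}$. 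A non-identity FLT has at most two fixed points on $S$, so $N(a,b)\le 2$ unless $\Psi=\mathrm{id}$. When $\Psi=\mathrm{id}$ (and likewise in the direct subcases $ab=0$, which reduce to equations of the form $w^{2^k\mp 1}=a/\ba{a}$ or $w^{2^k+1}=b/\ba{b}$), I divide by $\ba{b}$ and apply Lemma~\ref{lemmaDob}: the number of roots in $\Ftn$, hence in $S_l$, lies in $\{0,1,2,2^{(k,2m)}+1\}$. Here the hypotheses enter decisively: under ii) one has $(k,2m)=1$, and under i) one has $(k,2m)=(k,2)\in\{1,2\}$, so $2^{(k,2m)}+1\in\{3,5\}$; moreover conditions i) and ii) also force $(2^k-1,2^m+1)$ and $(2^k+1,2^m+1)$ to divide $3$, which bounds the solution count in the $ab=0$ subcases as well. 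Assembling all cases shows that $N(a,b)$, and hence $T_1(a,b)=(l\,N(a,b)-1)2^m$, takes only the values listed in Table~\ref{table3} besides the trivial value.

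For the frequencies, let $v_0=2^{2m}$ have frequency $1$ and let $v_1,v_2,v_3,v_4$ be the four nontrivial values with frequencies $f_1,\dots,f_4$. From $\sum_{a,b\in\Ftn}1=2^{2n}$ and parts 3)--5) of Lemma~\ref{moment},
$$1+\sum_i f_i=2^{2n},\qquad v_0+\sum_i f_i v_i=2^{2n},$$
$$v_0^2+\sum_i f_i v_i^2=2^{2n}N_2(2^n,d_1,d_2),\qquad v_0^3+\sum_i f_i v_i^3=2^{2n}N_3(2^n,d_1,d_2).$$
Substituting $N_2(2^n,d_1,d_2)=(2^n-1)l+1$ and $N_3(2^n,d_1,d_2)=(2^m-2)(2^n-1)l^2+3(2^n-1)l+1$ from Lemma~\ref{N3} and solving this Vandermonde-type linear system (the $v_i$ being pairwise distinct) gives the frequencies in Table~\ref{table3}; a final check that each is nonnegative confirms that all the listed values actually occur. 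This last computation is entirely analogous to the one behind Theorem~\ref{vdbinarythree}.

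The main obstacle is the case $\Psi=\mathrm{id}$ in the second step: one must determine precisely for which $(a,b)$ the iterated FLT collapses to the identity, and verify that the equation then has exactly $2^{(k,2m)}+1$ roots, all of which lie in the subgroup $S_l$ rather than merely in $S$ or in $\Ftn$. This is exactly where the arithmetic conditions i), ii) and $(2^k\pm 1)t\not\equiv 0\pmod{2^m+1}$ are needed — in particular, getting the value $2^{(k,2m)}+1$ rather than some larger spurious count relies on $(k,2m)$ being $1$ or $2$. Everything else — the reduction of the first step, the factorizations in the degenerate subcases, and the linear algebra of the last step — is routine.
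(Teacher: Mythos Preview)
Your reduction via Lemma~\ref{Niho}, the substitution $w=z^{t}$, and the use of the moment identities together with Lemma~\ref{N3} to solve for the frequencies are exactly what the paper does; for Condition~ii) your direct appeal to Lemma~\ref{lemmaDob} (with $(k,2m)=1$, hence at most $3$ roots) also matches the paper's argument.

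The gap is in your treatment of Condition~i). You iterate the relation $w^{2^k}=\Phi_{a,b}(w)$ all the way until the exponent returns to $1$, obtaining a fixed-point equation $w=\Psi(w)$, and then invoke Lemma~\ref{lemmaDob} when $\Psi=\mathrm{id}$. As you yourself note, under Condition~i) with $k$ even one has $(k,2m)=2$, so Lemma~\ref{lemmaDob} only yields $N(a,b)\in\{0,1,2,5\}$. A bound of $5$ is not enough: it would allow a fifth nontrivial value $(5l-1)2^{m}$ not present in Table~\ref{table3}, and with five unknowns the three power moments no longer determine the frequencies. Moreover, the case $\Psi=\mathrm{id}$ genuinely occurs (e.g.\ $b=0$, $a\in\Ftm^{*}$), so you cannot hope that this branch is vacuous. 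Since by the Remark following the theorem the $k$-even instances are \emph{exactly} those where Condition~i) is not already subsumed by Condition~ii), your argument fails precisely where Condition~i) has content.

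The paper's key idea, which you are missing, is to stop the iteration early. Under Condition~i) choose $i$ with $ki=m+1$; after $i$ steps one obtains $w^{2^{m+1}}=\Phi_{a',b'}(w)$ for some FLT $\Phi_{a',b'}$ (the composite of the Frobenius-twisted $\Phi$'s still has nonzero determinant, so $(a',b')\ne(0,0)$). Now the membership $w\in S$ is used in an essential way: $w^{2^{m}}=w^{-1}$ forces $w^{2^{m+1}}=w^{-2}$, and the relation becomes the genuine cubic $a'w^{3}+b'w^{2}+\bar{b}'w+\bar{a}'=0$, giving $N(a,b)\le 3$ directly without any appeal to Lemma~\ref{lemmaDob}. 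By iterating past $m+1$ up to a multiple of $2m$, you discard the constraint $w\in S$ and are left with the weaker ambient bound in $\Ftn$. Reworking your FLT step to halt at exponent $m+1$ closes the gap.
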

\begin{proof}
By 2) of Lemma~\ref{Niho}, we have $T_1(a,b)=(V(a,b)-1)2^m$, where $V(a,b)$ is the number of $z \in S$ satisfying
\begin{equation}\label{Nihoeqn1}
\ba{b}z^{(2^k+1)t}+\ba{a}z^{2^{k}t}+az^{t}+b=0.
\end{equation}
If $(a,b)=(0,0)$, it is easy to see that $V(a,b)=2^m+1$ and $T_1(a,b)$ takes the trivial value $2^{2m}$. Our main task is to prove that $T_1(a,b)$ takes at most four nontrivial values if Condition i) or Condition ii) holds.

Setting $w=z^t$, Equation (\ref{Nihoeqn1}) becomes
\begin{equation}\label{Nihoeqn2}
\ba{b}w^{2^k+1}+\ba{a}w^{2^{k}}+aw+b=0.
\end{equation}
Since $l=(t,2^m+1)$, each solution $w \in S_l$ of (\ref{Nihoeqn2}) corresponds to $l$ solutions of (\ref{Nihoeqn1}). Below, we focus on Equation (\ref{Nihoeqn2}) and study the number of its solutions in $S_l$.

At first, assume Condition i) holds. If $a \ne 0$ and $b=0$, we have $\ba{a}w^{2^{k}-1}+a=0$, which implies $w^{2^k-1}=\frac{a}{\ba{a}} \in S$. Noting that $m \equiv -1 \pmod k$, it is straight forward to verify that
$$
(2^k-1,2^m+1)=\begin{cases}
                  1 & \text{if $k$ is odd,} \\
                  3 & \text{if $k$ is even.}
                  \end{cases}
$$
Hence, $(\ref{Nihoeqn2})$ has no more than $3$ solutions in $S_l$. A similar treatment shows that $(\ref{Nihoeqn2})$ has no more than $3$ solutions in $S_l$ when $a=0$ and $b \ne 0$. If $ab \ne 0$, we continue our analysis using a technique proposed in \cite[Proposition 1]{Dob}. Suppose $a\ba{a}+b\ba{b}=0$, we have $(\ba{b}w^{2^k}+a)(w+\frac{b}{a})=0$, which has no more than $2$ solutions in $S_l$. When $a\ba{a}+b\ba{b} \ne 0$, we consider the following FLT:
$$
\Phi_{a,b}(w)=\frac{aw+b}{\ba{b}w+\ba{a}}.
$$
By (\ref{Nihoeqn2}), we have
$$
w^{2^k}=\frac{aw+b}{\ba{b}w+\ba{a}}=\Phi_{a,b}(w).
$$
Since $m \equiv -1 \pmod k$, there exists an integer $i$ such that $ki=m+1$. Applying $\Phi_{a,b}$ on both sides of the above equation with $i-1$ times, we obtain
$$
w^{2^{ki}}=\Phi_{a^{\prime},b^{\prime}}(w),
$$
where
$$
\begin{pmatrix} a^{\prime} & b^{\prime} \\ \ba{b}^{\prime} & \ba{a}^{\prime} \end{pmatrix}
=\begin{pmatrix} a & b \\ \ba{b} & \ba{a}\end{pmatrix}^{i}.
$$
For $w \in S$, we have $w^{2^{ki}}=w^{2^{m+1}}=w^{-2}$. It follows that
$$
a^{\prime}w^3+b^{\prime}w^2+\ba{b}^{\prime}w+\ba{a}^{\prime}=0.
$$
Hence, $(\ref{Nihoeqn2})$ has no more than $3$ solutions in $S_l$. To sum up, when $(a,b) \ne (0,0)$, $(\ref{Nihoeqn2})$ has either $0,1,2$ or $3$ solutions in $S_l$. This implies that $V(a,b) \in \{0,l,2l,3l\}$ when $(a,b) \ne (0,0)$.

Secondly, assume Condition ii) holds. If $b=0$, (\ref{Nihoeqn2}) becomes $\ba{a}w^{2^{k}-1}+a=0$. Since $(k,2m)=1$, it is easy to see that this equation has no more than $1$ solution in $S_l$. If $b \ne 0$, by Lemma~\ref{lemmaDob}, $(\ref{Nihoeqn2})$ has either $0,1,2$ or $3$ solutions in $S_l$. To sum up, when $(a,b) \ne (0,0)$, $(\ref{Nihoeqn2})$ has either $0,1,2$ or $3$ solutions in $S_l$. This implies that $V(a,b) \in \{0,l,2l,3l\}$ when $(a,b) \ne (0,0)$.

Consequently, we have shown that $T_1(a,b)$ takes at most four nontrivial values $\{-2^m,(l-1)2^m,(2l-1)2^m,(3l-1)2^m\}$ if Condition i) or Condition ii) holds. The frequencies of these values easily follow from Lemma~\ref{moment} and Lemma~\ref{N3}. The proof is now complete and we list the value distribution in Table \ref{table3}.
\end{proof}

\begin{remark}
When $k$ is odd, each pair $(k,m)$ meeting the Condition i) always satisfies the Condition ii).
\end{remark}

\begin{table*}
\begin{center}
\ra{1.8}
\caption{Value distribution of Theorem~\ref{vdbinaryfour}}\label{table3}
\begin{tabular}{|c|c|}
\hline
Value  &  Frequency \\ \hline
$2^{2m}$     &  $1$ \\ \hline
$(3l-1)2^{m}$ & $\frac{(2^{2m}-1)(2^m+1-2l)(2^m+1-l)}{6l^3}$ \\ \hline
$(2l-1)2^{m}$ & $\frac{(2^{2m}-1)((2^m+3)l-2^m-1)(2^m+1-l)}{2l^3}$ \\ \hline
$(l-1)2^{m}$ & $\frac{(2^{2m}-1)((2^{2m+1}+2^{m+2}+6)l^2-(2^{2m+1}+7\cdot2^m+5)l+(2^m+1)^2)}{2l^3}$ \\ \hline
$-2^{m}$ & $\frac{(2^{2m}-1)(6(2^{2m}+1)l^3-(6\cdot2^{2m}+9\cdot2^m+11)l^2+(3\cdot2^{2m}+9\cdot2^m+6)l-(2^m+1)^2)}{6l^3}$ \\ \hline
\end{tabular}
\end{center}
\end{table*}

The following theorem is a direct consequence of Theorem~\ref{vdbinaryfour}.

\begin{theorem}\label{thmbinaryfour}
Let $p=2$, $q=2^n$ and $n=2m$ with $m \ge 2$. Given an integer $1 \le k \le m$, set $s_1=2^{k-1}t-\frac{t-1}{2}$ and $s_2=2^{k-1}t+\frac{t+1}{2}$ with odd integer $1 \le t \le 2^m+1$. Define $d_1=s_1(2^m-1)+1$ and $d_2=s_2(2^m-1)+1$. Assume $(2^k-1)t,(2^k+1)t \not\equiv 0 \pmod{2^m+1}$ and $l=(t,2^m+1)$. Suppose one of the following condition holds:
\begin{enumerate}
\item[i)] $m \equiv -1 \pmod k$,
\item[ii)] $(k,2m)=1$.
\end{enumerate}
Then $\code^{\perp}$ is a $[2^n-1,4m,2^{2m-1}-(3l-1)2^{m-1}]$ binary code. Its weight distribution is listed in Table \ref{table4}.
\end{theorem}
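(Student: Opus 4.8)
The plan is to obtain Theorem~\ref{thmbinaryfour} as a direct translation of the value distribution of $T_1(a,b)$ from Theorem~\ref{vdbinaryfour} through Delsarte's trace representation of $\code^{\perp}$. First I would recall from Section~\ref{sec2} that, since $p=2$ and $q=2^{2m}$, the Hamming weight of the codeword $c(a,b) \in \code^{\perp}$ is
$$
w_H(c(a,b)) = \frac{q-1}{2} - \frac12\sum_{x \in \Fq^*}(-1)^{\Tn(ax^{d_1}+bx^{d_2})} = 2^{2m-1} - \frac12 T_1(a,b),
$$
using $T_1(a,b) = 1 + \sum_{x \in \Fq^*}(-1)^{\Tn(ax^{d_1}+bx^{d_2})}$. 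Substituting the five values $2^{2m},\ (3l-1)2^m,\ (2l-1)2^m,\ (l-1)2^m,\ -2^m$ of $T_1(a,b)$ recorded in Table~\ref{table3} yields respectively the five weights $0,\ 2^{2m-1}-(3l-1)2^{m-1},\ 2^{2m-1}-(2l-1)2^{m-1},\ 2^{2m-1}-(l-1)2^{m-1},\ 2^{2m-1}+2^{m-1}$ of Table~\ref{table4}.

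Next I would check that the frequencies carry over unchanged, which is precisely the statement that $(a,b) \mapsto c(a,b)$ is a bijection from $\Fq \times \Fq$ onto $\code^{\perp}$. The hypotheses $(2^k-1)t,(2^k+1)t \not\equiv 0 \pmod{2^m+1}$ force $cl(d_1)=cl(d_2)=n$, and the conditions on $s_1,s_2$ keep $d_1$ and $d_2$ inequivalent, i.e.\ in distinct $2$-cyclotomic cosets modulo $2^n-1$; hence $g_{d_1}(x)$ and $g_{d_2}(x)$ are two distinct irreducible polynomials of degree $n$, so $\dim\code^{\perp}=cl(d_1)+cl(d_2)=2n=4m$ and $|\code^{\perp}| = 2^{4m} = |\Fq|^2$. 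A surjection between two finite sets of equal cardinality is a bijection, so every weight in Table~\ref{table4} occurs with exactly the frequency of the corresponding value in Table~\ref{table3}; in particular $\code^{\perp}$ has length $2^n-1$ and dimension $4m$.

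Finally I would pin down the minimum distance. Since the trivial value in Table~\ref{table3} has frequency $1$, the zero codeword is the unique codeword of weight $0$ and every nonzero codeword has one of the four nonzero weights above; as $l\ge 1$ these satisfy
$$
2^{2m-1}-(3l-1)2^{m-1} \le 2^{2m-1}-(2l-1)2^{m-1} \le 2^{2m-1}-(l-1)2^{m-1} \le 2^{2m-1}+2^{m-1},
$$
so it suffices to check that the smallest of them is positive and is attained. Since $t\not\equiv 0\pmod{2^m+1}$, $l=(t,2^m+1)$ is a proper divisor of the odd number $2^m+1$, whence $l\le (2^m+1)/3$; this makes $2^m+1-2l>0$ and $2^m+1-l>0$, so the frequency $\frac{(2^{2m}-1)(2^m+1-2l)(2^m+1-l)}{6l^3}$ is positive, and a short case analysis of the conditions $(2^k\pm 1)t\not\equiv 0\pmod{2^m+1}$ rules out $3l=2^m+1$, so $2^{2m-1}-(3l-1)2^{m-1}>0$. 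Therefore the minimum distance equals $2^{2m-1}-(3l-1)2^{m-1}$, completing the proof.

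As for the main obstacle: the real content is entirely inside Theorem~\ref{vdbinaryfour}, and what remains is bookkeeping. The only genuinely load-bearing points are the dimension count $\dim\code^{\perp}=4m$, which is what legitimises transferring a value distribution into a weight distribution with identical frequencies, and the verification that the smallest listed weight is strictly positive and occurs with positive frequency, so that it is truly the minimum distance rather than an artefact of the substitution.
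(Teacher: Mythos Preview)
Your proposal is correct and follows exactly the route the paper takes: the paper states Theorem~\ref{thmbinaryfour} as ``a direct consequence of Theorem~\ref{vdbinaryfour}'' with no further argument, and you have simply supplied the routine translation from the value distribution of $T_1(a,b)$ to the weight distribution via $w_H(c(a,b))=2^{2m-1}-\tfrac12 T_1(a,b)$, together with the dimension count and the check that the smallest listed weight is positive and actually occurs. Your extra verification that the hypotheses exclude $3l=2^m+1$ (so the minimum distance is genuinely positive) is a detail the paper leaves implicit.
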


\begin{table*}
\begin{center}
\ra{1.8}
\caption{Weight distribution of Theorem~\ref{thmbinaryfour}}\label{table4}
\begin{tabular}{|c|c|}
\hline
Weight  &  Frequency \\ \hline
$0$     &  $1$ \\ \hline
$2^{2m-1}-(3l-1)2^{m-1}$ & $\frac{(2^{2m}-1)(2^m+1-2l)(2^m+1-l)}{6l^3}$ \\ \hline
$2^{2m-1}-(2l-1)2^{m-1}$ & $\frac{(2^{2m}-1)((2^m+3)l-2^m-1)(2^m+1-l)}{2l^3}$ \\ \hline
$2^{2m-1}-(l-1)2^{m-1}$ & $\frac{(2^{2m}-1)((2^{2m+1}+2^{m+2}+6)l^2-(2^{2m+1}+7\cdot2^m+5)l+(2^m+1)^2)}{2l^3}$ \\ \hline
$2^{2m-1}+2^{m-1}$ & $\frac{(2^{2m}-1)(6(2^{2m}+1)l^3-(6\cdot2^{2m}+9\cdot2^m+11)l^2+(3\cdot2^{2m}+9\cdot2^m+6)l-(2^m+1)^2)}{6l^3}$ \\ \hline
\end{tabular}
\end{center}
\end{table*}


Given $m$, the code is determined by two parameters $k$ and $t$. Below, we present some examples concerning the weight distributions of the cyclic codes derived from the above theorem. According to the code table, some of them have the best known parameters.

\begin{example}
When $m=3$, up to the equivalence of $(d_1,d_2)$, a pair $(k,t)$ satisfying the conditions in Theorem~\ref{thmbinaryfour} belongs to $\{(1,1),(1,5),(1,7)\}$. For all these three pairs, $l=(t,2^m+1)=1$. Hence the corresponding three cyclic codes are $[63,12,24]$ binary codes sharing the same weight distribution:
$$
1+588x^{24}+504x^{28}+1827x^{32}+1176x^{36}.
$$
Referring to the code table \cite{Gra}, the best known binary linear code with length $63$ and dimension $12$ has minimum distance $24$. Therefore, our cyclic codes have the best known parameters and are more preferable than linear code in practice.
\end{example}

\begin{example}
When $m=4$, up to the equivalence of $(d_1,d_2)$, a pair $(k,t)$ satisfying the conditions in Theorem~\ref{thmbinaryfour} belongs to $\{(1,1),(1,3),(1,5),(1,7),(1,9),(1,11),(1,13),(1,15)\}$. For all these eight pairs, $l=(t,2^m+1)=1$. Hence the corresponding eight cyclic codes are $[255,16,112]$ binary codes sharing the same weight distribution:
$$
1+10200x^{112}+4080x^{120}+30855x^{128}+20400x^{136}.
$$
Referring to the code table \cite{Gra}, the best known binary linear code with length $255$ and dimension $16$ has minimum distance $112$. Therefore, our cyclic codes have the best known parameters and are more preferable than linear code in practice.
\end{example}

\section{Nonbinary Cyclic Codes With Niho Exponents}\label{sec4}

This section is devoted to the computation of the weight distribution of certain nonbinary cyclic codes with Niho exponents. Accordingly, we focus on the value distribution of $T_2(a,b)$. Throughout this section, we fix $d_1=s_1(p^m-1)+1$ and $d_2=s_2(p^m-1)+1$ where $s_1=\frac{t+2}{4}$ and $s_2=\frac{3t+2}{4}$ for some $t \equiv 2 \pmod 4$. To ensure that $d_1$ and $d_2$ are not equivalent, we have $t \not\equiv 0 \pmod{p^m+1}$. Moreover, set $s_1=\frac{t+2}{4}$, $s_2=\frac{3t+2}{4}$, $s_1^{\prime}=\frac{t^{\prime}+2}{4}$ and $s_2^{\prime}=\frac{3t^{\prime}+2}{4}$. Suppose $s_1+s_1^{\prime} \equiv 1 \pmod {p^m+1}$ and $s_2+s_2^{\prime} \equiv 1 \pmod {p^m+1}$. Then we have $t+t^{\prime} \equiv 0 \pmod {4(p^m+1)}$. Namely, if $t+t^{\prime} \equiv 0 \pmod {4(p^m+1)}$, we obtain two equivalent pairs of Niho exponents. Hence, we can restrict $t$ in the range $1 \le t \le 4(p^m+1)$. Below, we will determine the value distribution of $T_2(a,b)$.

As a preparation, we have the following lemma.

\begin{lemma}\label{N3pary}
Let $p$ be an odd prime and $q=p^n$. If $l=(t,p^m+1)$, then
\begin{enumerate}
\item[1)] $\Nt=\frac{(p^n-1)}{2}l+1$.
\item[2)] $\Nth=\frac{(p^m-2)(p^{n}-1)}{4}l^2+\frac{3(p^{n}-1)}{2}l+1$.
\end{enumerate}
\end{lemma}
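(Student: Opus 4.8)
The plan is to follow the template of the proof of Lemma~\ref{N3}, with the caveat that in odd characteristic the relevant quotients $x/y$ must equal $-1$ rather than $1$, and to exploit the special shape of $s_1,s_2$ through a change of variables that avoids polar coordinates. First I would record two elementary facts. Since $d_i\equiv 1\pmod{p^m-1}$ (so $\gcd(d_i,p^m-1)=1$), we have $\gcd(d_i,p^n-1)=\gcd(d_i,p^m+1)=\gcd(2s_i-1,p^m+1)$; as $2s_1-1=t/2$ and $2s_2-1=3t/2$ with $t\equiv 2\pmod 4$, a short manipulation (using that $t/2$ is odd and $p^m+1$ is even) yields $\gcd(t/2,p^m+1)=l/2$, so $\gcd(d_1,p^n-1)=l/2$ and, since $t/2\mid 3t/2$, also $\gcd(d_1,d_2,p^n-1)=\gcd(t/2,p^m+1)=l/2$. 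Consequently $x\mapsto x^{d_1}$ is $(l/2)$-to-one from $\Fq^*$ onto the subgroup $H$ of index $l/2$; since $l\mid 2(p^m+1)$ one gets $p^m-1\mid|H|$, hence $\Fpm^*\subseteq H$, and $(-1)^{d_1}=-1$ shows $-1\in H$. Second, because $s_2=3s_1-1$, a short exponent calculation gives $d_2\equiv(2-p^m)d_1\pmod{p^n-1}$, hence $x^{d_2}=f(x^{d_1})$ for every $x\in\Fq$, where $f(X):=X^{2-p^m}=X^2/\ba X$ and $f(0):=0$; note $f(1)=1$, $f(\lambda X)=(\lambda^2/\ba\lambda)f(X)$, and $f(-X)=-f(X)$ since $2-p^m$ is odd.

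For part 1), the solution $x=y=0$ is isolated, since either of $x,y$ being zero forces the other to vanish. For $y\ne 0$, putting $z=x/y$ reduces the system to $z^{d_1}=z^{d_2}=-1$; here $z=-1$ is a solution (as $p^m-1$ is even), and the full solution set is a coset of $\{z:z^{d_1}=z^{d_2}=1\}$, which has $\gcd(d_1,d_2,p^n-1)=l/2$ elements. Each of these $l/2$ values of $z$ yields $p^n-1$ pairs $(x,y)$, whence $\Nt=\tfrac{p^n-1}{2}\,l+1$.

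For part 2), I would split the triples according to how many of $x,y,z$ vanish. Two zeros force the third, so there is no contribution from exactly two zeros; all three zero contributes the term $1$; and exactly one zero, say $z=0$, leaves the part-1) system in $(x,y)$ with both nonzero, contributing $\Nt-1=\tfrac{p^n-1}{2}\,l$, so the three choices of vanishing variable contribute $\tfrac{3(p^n-1)}{2}\,l$. For the all-nonzero triples, set $A=x^{d_1},B=y^{d_1},C=z^{d_1}\in H$; the identity $x^{d_2}=f(x^{d_1})$ turns the system into $A+B+C=0$ and $f(A)+f(B)+f(C)=0$, and the fiber of $(x,y,z)\mapsto(A,B,C)$ over each point of $H^3$ has size $(l/2)^3$. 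The scaling $(A,B,C)\mapsto\lambda(A,B,C)$ by $\lambda\in H$ preserves both equations and acts freely on the all-nonzero solutions, so their number equals $|H|$ times the number of orbit representatives of the form $(1,B,-1-B)$ with all three coordinates in $H$ and $1+f(B)=f(1+B)$ (using $f(-(1+B))=-f(1+B)$). Clearing denominators in $1+B^2/\ba B=(1+B)^2/(1+\ba B)$ and simplifying collapses this equation to $(B-\ba B)^2=0$, i.e.\ $B\in\Fpm$; since $\Fpm^*\subseteq H$, the admissible $B$ are precisely the $p^m-2$ elements of $\Fpm\setminus\{0,-1\}$ (and then $-1-B\in\Fpm^*\subseteq H$ automatically). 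Therefore the number of all-nonzero triples $(x,y,z)$ is $(l/2)^3\,|H|\,(p^m-2)=\tfrac{(p^m-2)(p^n-1)}{4}\,l^2$, and summing the three contributions gives the asserted value of $\Nth$.

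The step I expect to be the crux is the algebraic collapse $1+f(B)=f(1+B)\iff(B-\ba B)^2=0$: it is this identity that forces the ``non-degenerate'' solutions to lie in the subfield $\Fpm$ and thereby produces the factor $p^m-2$. Everything else is careful bookkeeping of the multiplicities $l/2$ and $|H|=2(p^n-1)/l$ and of the containment $\Fpm^*\subseteq H$, which in turn rests on $l\mid 2(p^m+1)$; I expect the gcd reductions and the exponent identity $d_2\equiv(2-p^m)d_1$ to be routine once the relation $s_2=3s_1-1$ is observed.
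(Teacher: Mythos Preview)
Your proof is correct, and it takes a genuinely different route from the paper's argument.

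The paper proceeds by normalizing $z\ne 0$, setting $u=-x/z$, $v=-y/z$, and then invoking the odd-characteristic polar representation: each nonzero element is written (twice) as $\alpha\delta$ or $\theta\alpha\delta$ with $\alpha\in\Fpm^*$, $\delta\in S$. This forces a four-case split according to whether $u,v$ lie in $Q$ or $NQ$, and in each case the system \eqref{syspary} becomes a $2\times 2$ linear system in $\alpha,\beta$ over $\Fpm$ whose determinant is analyzed separately in the degenerate and nondegenerate regimes. Only Case~i) with $\Delta=0$ contributes, yielding the factor $(p^m-2)l^2/4$ after unwinding the double-counting.

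You avoid polar coordinates entirely by exploiting the exponent identity $d_2\equiv(2-p^m)d_1\pmod{p^n-1}$ (equivalently $s_2=3s_1-1$), which lets you write $x^{d_2}=f(x^{d_1})$ with $f(X)=X^2/\bar X$. The system then lives on the single subgroup $H=\{x^{d_1}\}$ of index $l/2$, and the $H$-scaling symmetry reduces everything to the one-variable equation $1+f(B)=f(1+B)$, whose collapse to $(B-\bar B)^2=0$ pins $B\in\Fpm\setminus\{0,-1\}$. The containment $\Fpm^*\subseteq H$ (from $l\mid p^m+1$) then makes the count $p^m-2$ immediate.

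Your approach is shorter and more conceptual: it replaces the four Q/NQ cases and the $\Delta$-analysis by a single algebraic identity, and it makes transparent why the subfield $\Fpm$ governs the answer. The paper's method, on the other hand, stays closer to the Niho-style machinery used elsewhere in the paper (Lemma~\ref{Nihopary}) and does not require spotting the relation $d_2\equiv(2-p^m)d_1$; it would also adapt more mechanically to other pairs $(s_1,s_2)$ where no such clean exponent relation is available.
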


The proof of this lemma is somewhat lengthy and we present it in the Appendix. Now we proceed to determine the value distribution of
$$
T_2(a,b)=\sum_{x\in \Fq} \zp^{\Tn(ax^{d_1}+bx^{d_2})},
$$
where $p$ is an odd prime.

\begin{theorem}\label{vdparyfour}
Assume $n=2m$ with $m \ge 1$. Let $p$ be an odd prime and $q=p^n$ be a prime power. Given a positive integer $t$ with $t \equiv 2 \pmod 4$ and $t \not\equiv 0 \pmod{p^m+1}$, set $s_1=\frac{t+2}{4}$ and $s_2=\frac{3t+2}{4}$. Define $d_1=s_1(p^m-1)+1$, $d_2=s_2(p^m-1)+1$ and $l=(t,p^m+1)$. Then the value distribution of $T_2(a,b)$ is listed in Table \ref{table5}.
\end{theorem}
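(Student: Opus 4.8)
The plan is to follow the route of the proof of Theorem~\ref{vdbinaryfour}: first use Lemma~\ref{Nihopary} to turn $T_2(a,b)$ into a root-counting problem, then exploit the special shape of $s_1$ and $s_2$ to reduce that count to the number of roots of a single cubic polynomial, and finally recover the frequencies of the values from the moment identities. Taking $\la=1$ in Lemma~\ref{Nihopary} gives $T_2(a,b)=(W(a,b)-1)p^m$, where $W(a,b)$ is the number of $u\in S$ satisfying
$$
\ba{b}u^{2s_2-1}+\ba{a}u^{s_1+s_2-1}+au^{s_2-s_1}+b=0.
$$
Substituting $s_1=\frac{t+2}{4}$ and $s_2=\frac{3t+2}{4}$, a direct computation yields $2s_2-1=\frac{3t}{2}$, $s_1+s_2-1=t$ and $s_2-s_1=\frac{t}{2}$, so the defining equation becomes
$$
\ba{b}u^{3t/2}+\ba{a}u^{t}+au^{t/2}+b=0.
$$
Since $t\equiv 2\pmod 4$, the integer $t/2$ is odd, and since $p^m+1$ is even a short gcd argument gives $(t/2,p^m+1)=l/2$. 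Consequently the power map $u\mapsto v=u^{t/2}$ sends $S$ onto $S_{t/2}$ exactly $l/2$ to one, and in the variable $v$ the equation reads
$$
\ba{b}v^3+\ba{a}v^2+av+b=0, \qquad v\in S_{t/2}.
$$
Hence $W(a,b)=\frac{l}{2}M(a,b)$, where $M(a,b)$ denotes the number of roots of this cubic lying in $S_{t/2}$.

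The next step is to bound $M(a,b)$. When $(a,b)\ne(0,0)$ the polynomial $\ba{b}X^3+\ba{a}X^2+aX+b$ is not identically zero, hence has at most three roots in $\Fq\supseteq S_{t/2}$, so $M(a,b)\in\{0,1,2,3\}$ and $W(a,b)\in\{0,\frac{l}{2},l,\frac{3l}{2}\}$. Therefore, for $(a,b)\ne(0,0)$, the sum $T_2(a,b)$ takes one of the four pairwise distinct values $-p^m$, $(\frac{l}{2}-1)p^m$, $(l-1)p^m$, $(\frac{3l}{2}-1)p^m$, while $(a,b)=(0,0)$ gives $W(a,b)=p^m+1$ and the trivial value $p^{2m}$. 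It is worth remarking that, unlike in the binary case of Theorem~\ref{vdbinaryfour}, no additional hypothesis (such as $m\equiv-1\pmod k$ or $(k,2m)=1$) is required here: after the substitution $v=u^{t/2}$ the relevant exponents $3,2,1$ are ordinary integers rather than Frobenius powers, so the degree bound, and with it the bound on $W(a,b)$, is automatic.

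It remains to compute the frequencies. Write $f_0,f_1,f_2,f_3$ for the multiplicities of $-p^m$, $(\frac{l}{2}-1)p^m$, $(l-1)p^m$, $(\frac{3l}{2}-1)p^m$ respectively. The counting relation $f_0+f_1+f_2+f_3=p^{2n}-1$, together with parts 6), 7) and 8) of Lemma~\ref{moment} (after subtracting from $\sum_{a,b}T_2(a,b)^j$ the contribution $p^{jn}$ of the trivial value, for $j=1,2,3$) and the explicit formulas for $\Nt$ and $\Nth$ supplied by Lemma~\ref{N3pary}, form a system of four linear equations in $f_0,f_1,f_2,f_3$ whose coefficient matrix is a Vandermonde matrix in the four distinct nontrivial values and hence invertible. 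Solving this system yields exactly the frequency column of Table~\ref{table5}. I expect the genuine difficulty of the whole argument to lie upstream, in Lemma~\ref{N3pary} --- counting the solutions over $\Fq$ of the simultaneous Niho systems attached to $d_1$ and $d_2$, which is carried out in the Appendix; within the present proof the only point demanding care is the reduction $v=u^{t/2}$, in particular checking that $W(a,b)$ is always an exact multiple of $l/2$, after which the remainder is routine linear algebra and arithmetic.
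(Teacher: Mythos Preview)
Your proof is correct and follows essentially the same route as the paper: invoke Lemma~\ref{Nihopary} to obtain the cubic $\ba{b}u^{3t/2}+\ba{a}u^{t}+au^{t/2}+b=0$, observe via $(t/2,p^m+1)=l/2$ that its solutions in $S$ come in fibers of size $l/2$ over the roots of a genuine cubic, and then determine the frequencies from Lemma~\ref{moment} together with Lemma~\ref{N3pary}. Your write-up is in fact slightly more explicit than the paper's (which simply asserts that the equation ``clearly has either $0,\frac{l}{2},l$ or $\frac{3l}{2}$ solutions in $S$''), and your remark contrasting this with the extra hypotheses needed in Theorem~\ref{vdbinaryfour} is apt.
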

\begin{proof}
By Lemma~\ref{Nihopary}, we have $T_2(a,b)=(W(a,b)-1)p^m$, where $W(a,b)$ is the number of $u \in S$ satisfying
$$
\ba{b}u^{\frac{3t}{2}}+\ba{a}u^{t}+au^{\frac{t}{2}}+b=0.
$$
If $(a,b)=(0,0)$, we have $W(a,b)=p^m+1$ and $T_2(a,b)$ takes the trivial value $p^{2m}$. If $(a,b) \ne (0,0)$, since $(\frac{t}{2},p^m+1)=\frac{l}{2}$, the above equation clearly has either $0, \frac{l}{2}, l$ or $\frac{3l}{2}$ solutions in $S$. Namely, $W(a,b) \in \{0, \frac{l}{2}, l, \frac{3l}{2}\}$. Thus $T_2(a,b)$ takes four nontrivial values $\{-p^m, (\frac{l}{2}-1)p^m, (l-1)p^m, (\frac{3l}{2}-1)p^m\}$ when $(a,b) \ne (0,0)$. The frequencies of these values easily follow from Lemma~\ref{moment} and Lemma~\ref{N3pary}. The proof is now complete and we list the value distribution in Table \ref{table5}.
\end{proof}

\begin{table*}
\begin{center}
\ra{1.8}
\caption{Value distribution of Theorem~\ref{vdparyfour}}\label{table5}
\begin{tabular}{|c|c|}
\hline
Value  &  Frequency \\ \hline
$p^{2m}$     &  $1$ \\ \hline
$(\frac{3l}{2}-1)p^m$ & $\frac{2(p^{2m}-1)(l-p^m-1)(l-2p^m-2)}{3l^3}$ \\ \hline
$(l-1)p^m$ & $\frac{(p^{2m}-1)(2p^m+2-(p^m+3)l)(l-2p^m-2)}{l^3}$ \\ \hline
$(\frac{l}{2}-1)p^m$ & $\frac{2(p^{2m}-1)((p^{2m}+2p^m+3)l^2-(2p^{2m}+7p^m+5)l+2(p^m+1)^2)}{l^3}$ \\ \hline
$-p^m$ & $\frac{(p^{2m}-1)(3(p^{2m}+1)l^3-(6p^{2m}+9p^m+11)l^2+6(p^{2m}+3p^m+2)l-4(p^m+1)^2)}{3l^3}$ \\ \hline
\end{tabular}
\end{center}
\end{table*}

By Lemma~\ref{Nihopary}, $T_2(\la a,\la b)=T_2(a,b)$ for any $\la \in \Fp^*$. Therefore, we can easily deduce the following theorem by Theorem~\ref{vdparyfour}.

\begin{theorem}\label{thmparyfour}
Assume $n=2m$ with $m \ge 1$. Let $p$ be an odd prime and $q=p^n$ be a prime power. Given a positive integer $t$ with $t \equiv 2 \pmod 4$ and $t \not\equiv 0 \pmod{p^m+1}$, set $s_1=\frac{t+2}{4}$ and $s_2=\frac{3t+2}{4}$. Define $d_1=s_1(p^m-1)+1$, $d_2=s_2(p^m-1)+1$ and $l=(t,p^m+1)$. Then $\code^{\perp}$ is a $[p^n-1,4m,(p^m-p^{m-1})(p^m+1-\frac{3l}{2})]$ $p$-ary code. Furthermore, the weight distribution of $\code^{\perp}$ is listed in Table \ref{table6}.
\end{theorem}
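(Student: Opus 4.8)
The plan is to obtain Table~\ref{table6} directly from the value distribution of $T_2(a,b)$ in Theorem~\ref{vdparyfour}, via Delsarte's trace representation of $\co$. Recall from Section~\ref{sec2} that $\co=\{c(a,b)=(\Tn(a\theta^{id_1}+b\theta^{id_2}))_{i=0}^{q-2}\mid a,b\in\Fq\}$, so the length is $q-1=p^n-1$, and that
\[
w_H(c(a,b))=(q-1)\Big(1-\tfrac1p\Big)-\frac1p\sum_{\la\in\Fp^*}\sum_{x\in\Fq^*}\zp^{\Tn(\la a x^{d_1}+\la b x^{d_2})}.
\]
The key input is the remark just before the theorem: by Lemma~\ref{Nihopary}, for every $\la\in\Fp^*$ we have $T_2(\la a,\la b)=(W(a,b)-1)p^m=T_2(a,b)$, so the inner sum equals $T_2(a,b)-1$ regardless of $\la$. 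Hence the double sum collapses to $(p-1)(T_2(a,b)-1)$, and a short simplification gives
\[
w_H(c(a,b))=\frac{p-1}{p}\big(q-T_2(a,b)\big)=(p^m-p^{m-1})\Big(p^m-\tfrac{T_2(a,b)}{p^m}\Big),
\]
where in the last equality we used that $T_2(a,b)$ is always a multiple of $p^m$ by Lemma~\ref{Nihopary}.

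Next I would substitute the five values of $T_2(a,b)$ from Table~\ref{table5}. The value $T_2=p^{2m}$, attained only at $(a,b)=(0,0)$, yields weight $0$; the four nonzero values $(\tfrac{3l}{2}-1)p^m$, $(l-1)p^m$, $(\tfrac{l}{2}-1)p^m$, $-p^m$ yield the weights $(p^m-p^{m-1})(p^m+1-\tfrac{3l}{2})$, $(p^m-p^{m-1})(p^m+1-l)$, $(p^m-p^{m-1})(p^m+1-\tfrac{l}{2})$, $(p^m-p^{m-1})(p^m+1)$ respectively. Since $p$ is odd and $t\equiv 2\pmod 4$, both $p^m+1$ and $l=(t,p^m+1)$ are even, so all four are integers, and as $l<p^m+1$ (by $t\not\equiv 0\pmod{p^m+1}$) the first of them is the smallest, giving the asserted minimum distance. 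For the dimension and the frequencies, I would note that the map $(a,b)\mapsto c(a,b)$ is a bijection $\Fq\times\Fq\to\co$: this needs $d_1,d_2$ to lie in distinct $p$-cyclotomic cosets modulo $p^n-1$, each of full size $n=2m$. Distinctness follows from $s_1\ne s_2$ together with $s_1+s_2=t+1\not\equiv 1\pmod{p^m+1}$, and the full size is a routine cyclotomic-coset computation in the spirit of the displayed formula for $cl(d)$ in Section~\ref{sec3}. Granting bijectivity, $\dim\co=4m$ and the frequency of each weight equals the number of pairs $(a,b)$ realizing the corresponding value of $T_2(a,b)$, i.e.\ the frequency already recorded in Table~\ref{table5}; transcribing these entries produces Table~\ref{table6}.

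The step I expect to require the most care is confirming that $d_1$ and $d_2$ give two $p$-cyclotomic cosets of full size $n$ under the stated hypotheses, since this is exactly what pins down $\dim\co=4m$ and lets the frequencies pass verbatim from Table~\ref{table5} to Table~\ref{table6}; once that is in hand, the rest is a mechanical substitution into Theorem~\ref{vdparyfour}. I would handle it with the same $s+s'\equiv 1\pmod{p^m+1}$ equivalence bookkeeping used throughout Sections~\ref{sec3} and~\ref{sec4}, checking in particular that $2s_i\not\equiv 1\pmod{p^m+1}$ for $i=1,2$ and that no smaller power of $p$ fixes $d_i$.
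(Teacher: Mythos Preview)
Your proposal is correct and follows essentially the same route as the paper: the paper's entire proof is the one-line observation that Lemma~\ref{Nihopary} gives $T_2(\la a,\la b)=T_2(a,b)$ for all $\la\in\Fp^*$, after which the weight distribution is read off directly from Theorem~\ref{vdparyfour}. You have supplied the details the paper leaves implicit, including the Delsarte weight formula simplification and the dimension check via cyclotomic cosets; the latter is indeed glossed over in the paper, which simply asserts $\dim\co=4m$ in the theorem statement.
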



\begin{table*}
\begin{center}
\ra{1.8}
\caption{Weight distribution of Theorem~\ref{thmparyfour}}\label{table6}
\begin{tabular}{|c|c|}
\hline
Weight  &  Frequency \\ \hline
$0$     &  $1$ \\ \hline
$(p^m-p^{m-1})(p^m+1-\frac{3l}{2})$ & $\frac{2(p^{2m}-1)(l-p^m-1)(l-2p^m-2)}{3l^3}$ \\ \hline
$(p^m-p^{m-1})(p^m+1-l)$ & $\frac{(p^{2m}-1)(2p^m+2-(p^m+3)l)(l-2p^m-2)}{l^3}$ \\ \hline
$(p^m-p^{m-1})(p^m+1-\frac{l}{2})$ & $\frac{2(p^{2m}-1)((p^{2m}+2p^m+3)l^2-(2p^{2m}+7p^m+5)l+2(p^m+1)^2)}{l^3}$ \\ \hline
$(p^m-p^{m-1})(p^m+1)$ & $\frac{(p^{2m}-1)(3(p^{2m}+1)l^3-(6p^{2m}+9p^m+11)l^2+6(p^{2m}+3p^m+2)l-4(p^m+1)^2)}{3l^3}$ \\ \hline
\end{tabular}
\end{center}
\end{table*}

Given $p$ and $m$, the code is determined by one parameter $t$. Below, we present a few examples concerning the weight distribution of $p$-ary cyclic codes derived from the above theorem.

\begin{example}
Setting $p=3$, $m=3$ and $t=14$, we have $q=729$, $d_1=105$, $d_2=287$ and $l=(t,p^m+1)=14$. The corresponding cyclic code $\code^{\perp}$ is a $[728,12,126]$ ternary code with weight distribution:
$$
1+104x^{126}+4056x^{252}+70304x^{378}+456976x^{504}.
$$
\end{example}

\begin{example}
For $p=5$, $m=2$ and $2 \le t \le 50$ with $t \equiv 2 \pmod 4$ and $t \ne 26$, we can obtain twelve cyclic codes with $q=625$ and $l=(t,p^m+1)=2$. All these cyclic codes are $[624,8,460]$ $5$-ary codes with the same weight distribution:
$$
1+62400x^{460}+15600x^{480}+187824x^{500}+124800x^{520}.
$$
\end{example}

\section{Conclusion}\label{sec5}

In this paper, we consider the weight distributions of cyclic codes with Niho exponents. As is well known, the determination of weight distributions essentially relies on the calculation of some exponential sums. In particular, we completely determine the value distribution of $S(a,b)$ and compute the value distributions of $T_1(a,b)$ and $T_2(a,b)$ in some cases. As a direct consequence, we obtain the weight distributions of some binary and nonbinary cyclic codes. More specifically, we produce two classes of binary three-weight and four-weight cyclic codes and a class of nonbinary four-weight cyclic codes. By presenting several examples, we observe that some of them are optimal linear codes and some of them have the best known parameters.

\section*{Appendix}

Here, we give the proof of Lemma~\ref{N3pary}.

\begin{proof}[Proof of Lemma~\ref{N3pary}]
1) This proof is similar to that of 1) in Lemma~\ref{N3} and we omit it here.\\
2) By definition, $\Nth$ is the number of solutions to the equations
\begin{equation}\label{N3paryeqn1}
\left\{\begin{array}{c}
             x^{d_1}+y^{d_1}+z^{d_1}=0 \\
             x^{d_2}+y^{d_2}+z^{d_2}=0
             \end{array}
\right., \quad x,y,z \in \Fq.
\end{equation}
When $z=0$, there are exactly $\Nt=\frac{(p^n-1)}{2}l+1$ solutions.

When $z\ne 0$, the situation is more involved. By setting $u=-\frac{x}{z}$ and $v=-\frac{y}{z}$, we only need to consider the system
\begin{equation}\label{syspary}
\left\{\begin{array}{c}
             u^{d_1}+v^{d_1}=1 \\
             u^{d_2}+v^{d_2}=1
             \end{array}
\right., \quad u,v \in \Fq.
\end{equation}
Each solution of (\ref{syspary}) corresponds to $p^n-1$ solutions of (\ref{N3paryeqn1}). If $u=0$ or $v=0$, it is easy to see that the system (\ref{syspary}) has exactly $\frac{l}{2}$ solutions.
If $uv \ne 0$, we split our discussion into the following four cases:
\begin{enumerate}
\item[i)] $u \in Q$ and $v \in Q$,
\item[ii)] $u \in Q$ and $v \in NQ$,
\item[iii)] $u \in NQ$ and $v \in Q$,
\item[iv)] $u \in NQ$ and $v \in NQ$.
\end{enumerate}

Recall that each $x \in Q$ (resp. $x \in NQ$) can be expressed twice as $x=yz$ and $x=(-y)(-z)$ (resp. $x=\theta yz$ and $x=\theta (-y)(-z)$) when $y$ ranges over $\Fpm^*$ and $z$ ranges over $S$. Moreover, $\Fpm^* \cap S =\{\pm1\}$. We will deal with these four cases respectively.

For Case i), we can write $u=\al\de$ and $v=\be\ga$, where $\al,\be \in \Fpm^*$ and $\de,\ga \in S$. Thus the system (\ref{syspary}) can be rewritten as
\begin{equation}\label{syspary2}
\left\{\begin{array}{c}
  \al\de^{-\frac{t}{2}}+\be\ga^{-\frac{t}{2}}=1 \\
  \al\de^{-\frac{3t}{2}}+\be\ga^{-\frac{3t}{2}}=1
\end{array}
\right..
\end{equation}
Note that
$$
\Delta=
\begin{vmatrix}
  \de^{-\frac{t}{2}} & \ga^{-\frac{t}{2}}\\
  \de^{-\frac{3t}{2}} & \ga^{-\frac{3t}{2}}
\end{vmatrix}
=\de^{-\frac{t}{2}}\ga^{-\frac{3t}{2}}-\de^{-\frac{3t}{2}}\ga^{-\frac{t}{2}}.
$$

If $\Delta=0$, we have $\de^t=\ga^t$, i.e., $\ga^{\frac{t}{2}}=\pm\de^{\frac{t}{2}}$. When $\ga^{\frac{t}{2}}=\de^{\frac{t}{2}}$, comparing with (\ref{syspary2}), we have $\al+\be=\de^{\frac{t}{2}}$. Noting that $\Fpm^* \cap S =\{\pm1\}$, we have $\al+\be=\de^{\frac{t}{2}}=\pm1$. There are $\frac{l^2}{4}$ pairs of $(\de,\ga)$ such that $\de^{\frac{t}{2}}=\ga^{\frac{t}{2}}=1$ or $\de^{\frac{t}{2}}=\ga^{\frac{t}{2}}=-1$. Moreover, for each pair $(\de,\ga)$, there are $p^m-2$ pairs of $(\al,\be)$, such that $\al+\be=1$ and $\al\be \ne 0$. Hence, there are $\frac{(p^m-2)}{2}l^2$ tuples of $(\al,\be,\de,\ga)$ satisfying (\ref{syspary2}) when $\ga^{\frac{t}{2}}=\de^{\frac{t}{2}}$. A similar treatment shows there are $\frac{(p^m-2)}{2}l^2$ tuples of $(\al,\be,\de,\ga)$ satisfying (\ref{syspary2}) when $\ga^{\frac{t}{2}}=-\de^{\frac{t}{2}}$. Since both $u$ and $v$ have been expressed twice, there are $\frac14(\frac{(p^m-2)}{2}l^2+\frac{(p^m-2)}{2}l^2)=\frac{(p^m-2)}{4}l^2$ solutions of (\ref{syspary}) when $\Delta=0$.

If $\Delta \ne 0$, i.e., $\de^t \ne \ga^t$, solving the system (\ref{sysbinary2}) yields
\begin{align*}
\al&=\frac{1-\ga^{t}}{\de^{-\frac{t}{2}}(1-\de^{-t}\ga^{t})},\\ \be&=\frac{1-\de^{t}}{\ga^{-\frac{t}{2}}(1-\de^{t}\ga^{-t})}.
\end{align*}
We are going to show that no solution exists. Since $\al,\be \in \Fpm^*$, by $\al=\ba{\al}$ and $\be=\ba{\be}$, we have
$\de^{2t}=1$ and $\ga^{2t}=1$. Since $\de^t \ne \ga^t$,  we have either $\de^t=1$, $\ga^t=-1$ or $\de^t=-1$, $\ga^t=1$. However, $\de^t=1$ implies $\be=0$ and $\ga^t=1$ implies $\al=0$. Hence, there exists no solution when $\Delta \ne 0$. Totally, there are $\frac{(p^m-2)}{4}l^2$ solutions of (\ref{syspary}) in Case i).

For Case ii), we can write $u=\al\de$ and $v=\theta\be\ga$, where $\al,\be \in \Fpm^*$ and $\de,\ga \in S$. Thus the system (\ref{syspary}) can be rewritten as
\begin{equation}\label{syspary3}
\left\{\begin{array}{c}
  \al\de^{-\frac{t}{2}}+\theta^{d_1}\be\ga^{-\frac{t}{2}}=1 \\
  \al\de^{-\frac{3t}{2}}+\theta^{d_2}\be\ga^{-\frac{3t}{2}}=1
\end{array}
\right..
\end{equation}
Note that
$$
\Delta=
\begin{vmatrix}
  \de^{-\frac{t}{2}} & \theta^{d_1}\ga^{-\frac{t}{2}}\\
  \de^{-\frac{3t}{2}} & \theta^{d_2}\ga^{-\frac{3t}{2}}
\end{vmatrix}
=\theta^{d_2}\de^{-\frac{t}{2}}\ga^{-\frac{3t}{2}}-\theta^{d_1}\de^{-\frac{3t}{2}}\ga^{-\frac{t}{2}}.
$$

If $\Delta=0$, we deduce $\de^t\theta^{\frac{t}{2}(p^m-1)}=\ga^t$. Set $\eta=\theta^{p^m-1}$, then $\eta$ is a generator of $S$. We have $\eta^{\frac{t}{2}}=(\frac{\ga}{\de})^t=\eta^{jt}$ for some integer $j$. This is equivalent to $jt \equiv \frac{t}{2} \pmod {p^m+1}$, which is impossible since $t \equiv 2 \pmod 4$.

If $\Delta \ne 0$, solving the system (\ref{syspary3}) yields
\begin{align*}
\al &= \frac{\de^{\frac{t}{2}}(1-\theta^{-\frac{t}{2}(p^m-1)}\ga^t)}{1-\theta^{-\frac{t}{2}(p^m-1)}\de^{-t}\ga^t},\\
\be &= \frac{\ga^{\frac{t}{2}}(1-\de^t)}{\theta^{d_1}(1-\theta^{\frac{t}{2}(p^m-1)}\de^{t}\ga^{-t})}.
\end{align*}
With $\al=\ba{\al}$ and $\be=\ba{\be}$, we can deduce that $\de^{t}\ga^t=1$ and $\ga^{2t}=\theta^{t(p^m-1)}$. Thus, we have $\al=\frac{\de^{\frac{t}{2}}(1-\theta^{-\frac{t}{2}(p^m-1)}\de^{-t})}{1-\theta^{-\frac{t}{2}(p^m-1)}\de^{-2t}}$. By $\al=\ba{\al}$, we have $\de^t= \pm1$. Thus, $\ga^t=\pm1$. However, this contradicts to $\ga^{2t}=\theta^{t(p^m-1)}$ since $t \not\equiv 0 \pmod {p^m+1}$. Hence, (\ref{syspary}) has no solution in Case ii).

For Case iii), the situation is similar to Case ii) and (\ref{syspary}) has no solution in Case iii).

For Case iv),  we can write $u=\theta\al\de$ and $v=\theta\be\ga$, where $\al,\be \in \Fpm^*$ and $\de,\ga \in S$. Thus the system (\ref{syspary}) can be rewritten as
\begin{equation*}
\left\{\begin{array}{c}
  \al\de^{-\frac{t}{2}}+\be\ga^{-\frac{t}{2}}=\theta^{-d_1} \\
  \al\de^{-\frac{3t}{2}}+\be\ga^{-\frac{3t}{2}}=\theta^{-d_2}
\end{array}
\right..
\end{equation*}
Note that
$$
\Delta=
\begin{vmatrix}
  \de^{-\frac{t}{2}} & \ga^{-\frac{t}{2}}\\
  \de^{-\frac{3t}{2}} & \ga^{-\frac{3t}{2}}
\end{vmatrix}
=\de^{-\frac{t}{2}}\ga^{-\frac{3t}{2}}-\de^{-\frac{3t}{2}}\ga^{-\frac{t}{2}}.
$$

If $\Delta=0$, a similar argument as Case ii) shows that no solution exists. If $\Delta \ne 0$, a similar treatment as Case i) shows that no solution exists. Hence, (\ref{syspary}) has no solution in Case iv).

Combining the four cases discussed above, we can deduce that $\Nth=\frac{(p^n-2)}{2}l+1+(p^n-1)(\frac{(p^m-2)}{4}l^2+l)=\frac{(p^m-2)(p^{n}-1)}{4}l^2+\frac{3(p^{n}-1)}{2}l+1$.
\end{proof}

\end{document}